\title{Conditional Dichotomy of Boolean Ordered Promise CSPs}
\newcommand{\B}{\mathcal{B}}
\newcommand{\E}{\mathbb{E}}
\newcommand{\bits}{\{0,1\}}
\newcommand{\si}{\Phi}
\begin{document}
\maketitle

\begin{abstract}
    Promise Constraint Satisfaction Problems (PCSPs) are a generalization of Constraint Satisfaction Problems (CSPs) where each predicate has a strong and a weak form and given a CSP instance, the objective is to distinguish if the strong form can be satisfied vs.\,even the weak form cannot be satisfied. Since their formal introduction by Austrin, Guruswami, and H\aa stad~\cite{AGH17}, there has been a flurry of works on PCSPs, including recent breakthroughs in approximate graph coloring~\cite{barto2019algebraic,KO19,WZ19}. The key tool in studying PCSPs is the algebraic framework developed in the context of CSPs where the closure properties of the satisfying solutions known as \emph{polymorphisms} are analyzed. 
    
    The polymorphisms of PCSPs are significantly richer than CSPs---this is illustrated by the fact that even in the Boolean case, we still do not know if there exists a dichotomy result for PCSPs analogous to Schaefer's dichotomy result~\cite{Schaefer78} for CSPs. In this paper, we study a special case of Boolean PCSPs, namely Boolean \textit{Ordered} PCSPs where the Boolean PCSPs have the predicate $x \leq y$. In the algebraic framework, this is the special case of Boolean PCSPs when the polymorphisms are \emph{monotone functions}. 
    We prove that Boolean Ordered PCSPs exhibit a computational dichotomy assuming the Rich $2$-to-$1$ Conjecture~\cite{BravermanKM19} which is a perfect completeness surrogate of the Unique Games Conjecture. 
    
    In particular, assuming the Rich $2$-to-$1$ Conjecture, we prove that a Boolean Ordered PCSP can be solved in polynomial time if for every $\epsilon >0$, it has polymorphisms where each coordinate has \emph{Shapley value} at most $\epsilon$, else it is NP-hard. 
    The algorithmic part of our dichotomy result is based on a result that if a monotone Boolean function has all Shapley values small, then it has a large threshold function as a minor.
    For the conditional hardness result, we show that Shapley value behaves in a consistent manner under a uniformly random $2$-to-$1$ minor. As a structural result of independent interest, we construct an example to show that the Shapley value can behave inconsistently with respect to an arbitrary $2$-to-$1$ minor. 
\end{abstract}

\section{Introduction}
\label{sec:intro}

Constraint satisfaction problems (CSP) have played a very influential role in the theory of computation, providing an excellent testbed for the development of both algorithmic and hardness techniques, which then extend to more general settings. A CSP over domain $D$ is specified by a finite collection $\mathbb{A}$ of predicates over $D$, and is denoted as CSP($\mathbb{A}$). Given an input containing $n$ variables with constraints on the variables using these predicates, the objective is to identify if we can assign values from $D$ to the variables that satisfies all the constraints. Examples of CSPs include classical problems such as $3$-SAT and $3$-Coloring of graphs. 

When the domain is Boolean, Schaefer~\cite{Schaefer78} proved that every CSP is either in P or is NP-Complete. Feder and Vardi~\cite{FederV98} conjectured that the same should hold over arbitrary domains as well. They also showed that the then known algorithmic results all follow by the algebraic closure properties of the CSPs. 
This notion was formalized by Jeavons, Cohen, and Gyssens~\cite{JeavonsCG97, Jeavons98} and other works~\cite{BulatovJK05} that crystallized the \textit{(universal) algebraic approach} to CSPs.
In the algebraic approach, the higher-order closure properties obeyed by the predicates, namely their \emph{polymorphisms}, are studied. 
A polymorphism is a function that, when applied coordinate-wise to arbitrary satisfying assignments to the predicate, is guaranteed to produce an output that satisfies the predicate.
For example, consider an arbitrary instance $I$ of the $2$-SAT problem over $n$ variables, and suppose that $\textbf{x}, \textbf{y},\textbf{z} \in \{0,1\}^n$ are three assignments that satisfy all the constraints in $I$. Now, if we compute $\textbf{u} \in \{0,1\}^n$ that is obtained by setting $u_i =\textsf{MAJ}(x_i, y_i,z_i)$ for all $i \in [n]$, the assignment $\textbf{u}$ also satisfies all the constraints of $I$. Thus, the majority function on $3$ bits is a polymorphism of the $2$-SAT CSP. 
On the other hand, for the $3$-SAT problem, it is not hard to prove that the only polymorphisms are the dictator functions. 
The algebraic approach has been immensely successful and culminated in the recent resolution of Feder-Vardi conjecture by Bulatov~\cite{Bulatov17} and Zhuk~\cite{Zhuk20}. Further, these proofs yield a precise understanding of the mathematical
structure underlying efficient algorithms: if the CSP has a ``non-trivial''
polymorphisms, the CSP is polytime solvable, and otherwise, it is NP-complete.

In this paper, we study Promise Constraint Satisfaction Problems (PCSPs) that vastly generalize the CSPs. In the PCSPs, each predicate has a weak and a strong form--given an instance of PCSP containing $n$ variables with the constraints, the goal is to distinguish between the case that the stronger form can be satisfied vs.\,even the weaker one cannot be satisfied. A classical example of a PCSP is the approximate graph coloring problem, where given a graph $G$, the goal is to distinguish between the cases that $G$ can be colored with $c$ colors vs.\,it cannot be colored with $s$ colors for some $c \leq s$. 
Another example is the ($1$-in-$3$ SAT, NAE-$3$-SAT), wherein given a $1$-in-$3$-SAT instance that is promised to be satisfiable, the objective is to assign $0,1$ values to the variables such that each constraint is satisfied as in a NAE-$3$-SAT instance, i.e., both $0$ and~$1$ occur in every constraint. 
While the individual CSPs, namely $1$-in-$3$-SAT and NAE-$3$-SAT are both NP-hard, the above PCSP is in~P. 
The study of PCSPs was formally initiated by Austrin, Guruswami, and H\aa stad~\cite{AGH17}, and since then, there has been a lot of recent interest in PCSPs, including the development of a systematic theory in \cite{BG18,barto2019algebraic} and leading to breakthroughs in approximate graph coloring~\cite{barto2019algebraic, KO19, WZ19}.

The central question in the study of PCSPs is whether there exists a complexity dichotomy for PCSPs, i.e., if every PCSP is either in P or is NP-complete. 
As is the case with CSPs, the key tool towards establishing a potential dichotomy result is the algebraic approach.
The Galois correspondence from the CSP world extends to PCSPs, i.e., the polymorphisms fully capture the computational complexity of the underlying PCSP~\cite{P02, BG18}.
This has been extended to show that just the identities satisfied by the polymorphisms suffice to capture the computational complexity of the underlying PCSP~\cite{barto2019algebraic}. 
However, the polymorphisms of PCSPs are much richer, and characterizing which polymorphisms lead to algorithms and which ones lead to hardness has been a challenging problem. Conceptually, the principal difficulty is that the polymorphisms for CSPs are closed under composition (hence referred to as \textit{clones}), whereas for PCSPs, this is no longer the case.

As a result, even in the Boolean case, we do not have a dichotomy theorem for PCSPs.
Towards establishing a potential Boolean PCSP dichotomy, progress has been made by Ficak, Kozik, Ols\'{a}k and Stankiewicz~\cite{FicakKOS19}, who obtained a dichotomy result when each predicate is symmetric. In this paper, we study Boolean PCSPs that contain the \emph{simplest non-symmetric predicate}, $x \rightarrow y$. We call such Boolean PCSPs \textit{Ordered} as we can also view the implication constraint as an ordering requirement $x \leq y$\footnote{As PCSPs have pairs of predicates, the ordering predicate pair has both the strong and weak forms as $x \leq y$, i.e., $\{(0,0),(0,1),(1,1)\}$}.

Ordered Boolean PCSPs have come under recent study. The work of Petr~\cite{petr20} (inspired by work of Barto~\cite{Bar18,Bar20}) considered a special class of Ordered Boolean PCSPs which have an additional predicate $x \neq y$ (this corresponds to allowing negations in the constraints) as well as the requirement that the majority on three bits is \emph{not} a polymorphism. In this setting Petr was able to show that such Ordered Boolean PCSPs are NP-hard. However, the approach considered does not seem immediately extendable to analyzing general Ordered Boolean PCSPs~\cite{Bar20}.

The main motivation for studying these PCSPs comes from the fact that adding the additional $ x \leq y$ predicate is equivalent to restricting the polymorphisms of the PCSPs to be \emph{monotone functions}. Monotonicity is an influential theme in the study of Boolean functions and complexity theory, and understanding the structure of polymorphisms in the monotone case is an important (and certainly necessary) subcase towards a general characterization of polymorphisms vs.\,tractability for arbitrary Boolean PCSPs. For the special case of Boolean Ordered PCSPs which include negation constraints, it was conjectured in~\cite{Bar20} that polynomial time tractability is characterized by the existence of majority polymorphisms of arbitrarily large arity.

Our main result is that Boolean Ordered PCSPs exhibit a dichotomy, under the recently introduced \emph{Rich $2$-to-$1$ Conjecture} of Braverman, Khot, and Minzer~\cite{BravermanKM19}.   

\begin{theorem}
\label{thm:main}
Assuming the Rich $2$-to-$1$ Conjecture, every Ordered Boolean PCSP is either in P or is NP-Complete. In particular, an Ordered PCSP $\Gamma$ is in P if for every $\epsilon >0$, there are polymorphisms of\, $\Gamma$ with every coordinate having Shapley value at most $\epsilon$, else it is NP-Complete. Equivalently, $\Gamma$ is in P if it has threshold\footnote{We call a Boolean function $f:\{0,1\}^n \to \{0,1\}$ a threshold function if there is an integer $t$ such that for every $\textbf{x} \in \{0,1\}^n$, $f(\textbf{x})=1$ if and only if $|\{i \in [n]: x_i = 1\}|\geq t$.}
polymorphisms of arbitrarily large arity, else it is NP-Complete.
\end{theorem}

As a concrete example, recall the earlier mentioned example of ($1$-in-$3$-SAT, NAE-$3$-SAT). As it has threshold polymorphisms of arbitrarily large arity, it remains polynomial time solvable even after adding the predicate $x \rightarrow y$. However, if we also add another two-variable predicate $x \neq y$, the PCSP no longer has threshold polymorphisms, and by our above result, it becomes NP-Complete. 
 
We obtain the conditional dichotomy result by analyzing the polymorphisms of the Ordered PCSPs. The key idea in the algebraic approach to PCSPs is that the PCSP is tractable if the polymorphisms are close to symmetric, and the PCSP is hard if all the polymorphisms have a small number of ``important'' coordinates. More concretely, on the algorithmic front, it has been proved that symmetric polymorphisms of arbitrarily large arities lead to polynomial time algorithms for PCSPs~\cite{BrakensiekGWZ20}. 
On the hardness side, if all the polymorphisms depend on a bounded number of coordinates, then the underlying PCSP is NP-hard~\cite{AGH17}. This has been extended to various other notions, including combinatorial ones such as $C$-fixing~\cite{BG16}, and topological ones such as having a bounded number of coordinates with non-zero winding number~\cite{KO19}. In this paper, we study the monotone polymorphisms using analytical techniques.

In particular, we use Shapley value to analyze the monotone polymorphisms. 
For a monotone function $f:\{0,1\}^n \rightarrow \{0,1\}$, the Shapley value of a coordinate $i$ is the probability that on a random path from $\{0,0,\ldots, 0\}$ to $\{1,1,\ldots, 1\}$, the function value turns from $0$ to $1$ when we switch the $i$th coordinate to $1$. 
Initially studied to understand the power of an individual in voting systems~\cite{ShapleyS54}, Shapley value has now found applications in various settings, especially in game theory~\cite{MichalakASRJ13, NarayanamN11}.  
In our setting, there are two advantages of using Shapley value to study the polymorphisms. First, it is a relative measure of the importance of a coordinate, as opposed to other notions of Influence which are absolute. This helps in bounding the number of coordinates with Shapley value above a certain threshold. Second, it is a versatile measure with combinatorial and analytical interpretations~\cite{DeDS17} which helps in proving that Shapley value stays consistent under function minors\footnote{A minor(formally defined in~\Cref{sec:prelims}) of a function $f:\{0,1\}^m\rightarrow\{0,1\}$ is a function $g:\{0,1\}^n \rightarrow\{0,1\}$ of smaller arity $n\leq m$ obtained from $f$ by identifying sets of variables together.}, a key property necessary in both the algorithm and the hardness. 

\paragraph{Algorithm Overview.}
We obtain our algorithmic result by using the Basic Linear Programming with Affine relaxation (BLP+Affine relaxation), combined with a structural result regarding the monotone functions with bounded Shapley value. As mentioned earlier, PCSPs with symmetric polymorphisms of arbitrarily large arities can be solved in polynomial time using the BLP+Affine relaxation algorithm~\cite{BrakensiekGWZ20}. Our main structural result is that Boolean functions with bounded Shapley value have arbitrarily large threshold functions as minors. Since the set of polymorphisms of a PCSP are closed under taking minors, this proves that the underlying PCSP~$\Gamma$ has arbitrarily large threshold functions as polymorphisms, which then implies that~$\Gamma$ is in~P. The key tool underlying our structural result is a result of Kalai~\cite{Kalai04} that states that under certain conditions, monotone Boolean functions with arbitrarily small Shapley value have a sharp threshold. 

\paragraph{Hardness Overview.}
We obtain our hardness result assuming the Rich $2$-to-$1$ Conjecture. Braverman, Khot, and Minzer~\cite{BravermanKM19} introduced the conjecture as a perfect completeness surrogate of the well known Unique Games Conjecture~\cite{Khot02a}. They also proved that the conjecture is equivalent to Unique Games Conjecture when we relax the perfect completeness requirement.   
The reduction from the Rich $2$-to-$1$ Conjecture to PCSPs follows using the standard Label Cover-Long Code paradigm. 
The key ingredient in this reduction is a decoding of the Long Codes to a bounded number of coordinates that is consistent under function minors. 
We decode each Long Code function to the coordinates with $\Omega(1)$ Shapley value---as the sum of Shapley values of all the coordinates of any monotone function is equal to $1$, there is a bounded number of such coordinates. 
We argue about the consistency of this decoding using a structural result that states that under a uniformly random minor, Shapley value is roughly preserved. 

\paragraph{On the necessity of ``richness'' in  $2$-to-$1$ Conjecture.}
A natural question is whether our hardness result can be obtained using a weaker assumption such as the $2$-to-$1$ conjecture (whose imperfect completeness version was recently established~\cite{KMS17,DKKMS18a,DKKMS18,KMS18}). We shed some light on this question by showing that there are monotone Boolean functions $f:\{0,1\}^{2n}\rightarrow \{0,1\}$ and $g:\{0,1\}^n\rightarrow \{0,1\}$ such that $g$ is a minor of $f$ with respect to the $2$-to-$1$ function $\pi$, both the functions $f$ and $g$ have exactly one coordinate $i_1$, $i_2$ respectively, with $\Omega(1)$ Shapley value, and yet $\pi(i_1) \neq i_2$. 
Such an adversarial example is interesting from two angles: first, it shows that even using the $2$-to-$1$ conjecture, the Shapley value based decoding is not consistent. 
Second, it gives an example of agents pairing up maliciously to completely alter the Shapley value. 
The underlying phenomenon is that the rich $2$-to-$1$ games have ``subcode-covering'' property, which is absent in the standard $2$-to-$1$ games, helping in preserving the consistency of any biased influence measure such as the Shapley value.

\paragraph{Organization.}
In~\Cref{sec:prelims}, we formally define PCSPs, polymorphisms, and Shapley value. We present the algorithmic and hardness parts of our dichotomy result in~\Cref{sec:alg} and~\Cref{sec:hardness} respectively. We present the adversarial example of a $2$-to-$1$ minor that alters the Shapley value in~\Cref{sec:adversarial}.

\section{Preliminaries}
\label{sec:prelims}

\noindent \textbf{Notations.} We use $[n]$ to denote the set $\{1,2,\ldots, n\}$. For a $k$-ary relation $A \subseteq [q]^k$, we abuse the notation and use $A$ both as a subset of $[q]^k$, and also as a predicate $A:[q]^k \rightarrow \{0,1\}$. Similarly, for a function $f:\{0,1\}^n \to D$ and a set $S \subseteq [n]$, we sometimes use $f(S)$ to denote $f(\textbf{v})$ where $v_i = 1$ if $i \in S$, and $0$ otherwise. 
For a vector $\textbf{x}=(x_1, x_2, \ldots, x_n) \in \{0,1\}^n$, we use $\textsf{hw}(\textbf{x})$ to denote $\sum_{i=1}^n x_i$.
For two vectors $\textbf{x}, \textbf{y} \in \{0,1\}^n$, we say that $\textbf{x} \leq \textbf{y}$ if $x_i \leq y_i$ for all $i \in [n]$.
A Boolean function $f:\{0,1\}^n \rightarrow \{0,1\}$ is called monotone if $f(\textbf{x})\leq f(\textbf{y})$ for all $\textbf{x}\leq \textbf{y}$.

\paragraph{PCSPs and Polymorphisms.}
We first define Constraint Satisfaction Problems(CSP). 
\begin{definition}(CSP) Given a $k$-ary relation $A:D^k\rightarrow \{0,1\}$ over a domain $D$, the Constraint Satisfaction Problem(CSP) associated with the predicate $A$ takes a set of variables $V =\{v_1, v_2, \ldots, v_n\}$ as input which are to be assigned values from $D$. There are $m$ constraints $(e_1, e_2, \ldots, e_m)$ each consisting of $e_i = ( (e_i)_1, (e_i)_2, \ldots, (e_i)_k ) \subseteq V^k $ that indicate that the corresponding assignment should belong to $A$. 
The objective is to identify if there is an assignment $V \rightarrow D$ that satisfies all the constraints. 
\end{definition}
In general, we can have multiple relations $A_1,A_2,\dots,A_l$, and different constraints can use different relations. We denote such a CSP by $CSP(A_1,A_2,\ldots,A_l)$.  


We formally define Promise Constraint Satisfaction Problems (PCSP). 
\begin{definition}(PCSP)
	In a Promise Constraint Satisfaction Problem $PCSP(\Gamma)$ over a pair of domains $D_1, D_2$, we have a set of pairs of relations $\Gamma = \{ (A_1,B_1), (A_2,B_2),\ldots, (A_l,B_l)\}$ such that for every $i \in [l]$, $A_i$ is a subset of $D_1^{k_i}$ and $B_i$ is a subset of $D_2^{k_i}$. 
	Furthermore, there is a homomorphism $h:D_1\rightarrow D_2$ such that for all $i\in[l]$ and $x \in D_1^{k_i}$, $x \in A_i$ implies $h(x) \in B_i$.
	Given a $CSP(A_1,A_2,\ldots,A_l)$ instance, the objective is to distinguish between the two cases:
	\begin{enumerate}
		\item There is an assignment to the variables from $D_1$ that satisfies every constraint when viewed as $CSP(A_1,A_2,\ldots,A_l)$. 
		\item There is no assignment to the variables from $D_2$ that satisfies every constraint when viewed as $CSP(B_1,B_2,\ldots,B_l)$. 
	\end{enumerate}
\end{definition}

We now define Boolean Ordered PCSPs. 
\begin{definition}(Boolean Ordered PCSP)
A PCSP over a pair of domains $D_1,D_2$ with the set of pairs of relations $\Gamma = \{ (A_1,B_1), (A_2,B_2),\ldots, (A_l,B_l)\}$ is said to be Boolean Ordered if the following hold.
\begin{enumerate}
    \item The domains are both Boolean i.e., $D_1 = D_2 = \{ 0,1\}$. 
    \item There exists $i \in [l]$ such that $A_i = B_i = \{(0,0), (0,1),(1,1)\}$. 
\end{enumerate}
\end{definition}

Associated with every PCSP, there are polymorphisms that capture the closure properties of the satisfying solutions to the PCSP. More formally, we can define polymorphisms of a PCSP as follows. 
\begin{definition}(Polymorphisms)
	For $PCSP(\Gamma)$ with $\Gamma = \{ ((A_1,B_1), (A_2, B_2), \ldots, (A_l,B_l))\}$ where for every $i \in [l]$,
	$A_i:[q_1]^{k_i} \rightarrow \{0,1\}, B_i:[q_2]^{k_i} \rightarrow \{0,1\}$, a polymorphism of arity $n$ is a function $f:[q_1]^n \rightarrow [q_2]$ that satisfies the below property for all $i \in [l]$.
	For all $(\textbf{v}_1,\textbf{v}_2,\ldots,\textbf{v}_{k_i})$ such that for all $j \in [n], ( (\textbf{v}_1)_j, (\textbf{v}_2)_j, \ldots, (\textbf{v}_{k_i})_j ) \in A_i$, we have 
	\[
	(f(\textbf{v}_1), f(\textbf{v}_2), \ldots , f(\textbf{v}_{k_i})) \in B_i
	\]
	We use $\emph{Pol}(\Gamma)$ to denote the family of all the polymorphisms of $PCSP(\Gamma)$.
\end{definition}
A crucial property satisfied by $\text{Pol}(\Gamma)$ is that the family of functions is closed under taking minors. We first define the minor of a function formally. 
\begin{definition}(Minor of a function)
 For a Boolean function $f: [q]^n \rightarrow [q']$ and an integer $m$,  the function $g : [q]^m \rightarrow [q']$ is said to be a minor of $f$ with respect to the function $\pi : [n] \rightarrow [m]$ if 
 \[
 g(x_1, x_2, \ldots, x_m) = f( x_{\pi(1)}, x_{\pi(2)}, \ldots, x_{\pi(n)} )\quad \forall x_1, x_2, \ldots, x_m \in [q].
 \]
 We say that a function $g$ is a minor of $f$ if there exists some $\pi$ such that $g$ is a minor of $f$ with respect to $\pi$. 
\end{definition}

We are often interested in $2$-to-$1$ minors. A function $g$ is said to be a $2$-to-$1$ minor of $f$ if there exists a $2$-to-$1$ function $\pi$ such that $g$ is a minor of $f$ with respect to $\pi$, where $2$-to-$1$ function is defined below. 
\begin{definition}($2$-to-$1$ function)
A function $\pi : [2n] \rightarrow [n]$ is said to be a $2$-to-$1$ function if 
\[
|\pi^{-1}(i)|=2 \,\, \forall i \in [n]
\]
\end{definition}

By the definition of the polymorphisms, we can infer that if $f \in \text{Pol}(\Gamma)$ for a PCSP $\Gamma$, then for all functions $g$ such that $g$ is a minor of $f$, we have $g \in \text{Pol}(\Gamma)$. Such a family of functions that is closed under taking minors is called as a \textit{minion}. We often refer to the family of polymorphisms of a PCSP as the polymorphism minion. 

We refer the reader to~\cite{barto2019algebraic} for an extensive introduction to PCSPs and polymorphisms.

\paragraph{Shapley value.}
Let $f:\{0,1\}^{n}\rightarrow \{0,1\}$ be a monotone Boolean function. 
We can view the monotone Boolean function $f$ as a voting scheme between two parties, and $n$ agents: the winner of the voting scheme when the $i$th agent votes for $\textbf{x}_i \in \{0,1\}$ is $f(\textbf{x})$.
The relative power of an agent in a voting scheme is typically measured using the Shapley-Shubix Index, also known as Shapley Value. 

Informally speaking, the Shapley Value of a coordinate $i$ is the probability that the $i$th agent is the altering vote when we start with all zeroes and flip the votes in a uniformly random order. More formally, 
\begin{definition}(Shapley value)
Let $f: \{0,1\}^n \rightarrow \{0,1\}$ be a monotone Boolean function. Let $\sigma \in S_n$ be a uniformly random permutation of $[n]$. For an integer $j \in [n]$, let $P_j$ denote the the set of first $j$ elements of $\sigma$ i.e., $P_j := \{ \sigma(1), \sigma(2), \ldots,\sigma(j)\}$. 
The Shapley value  $\si_f(i)$ of the coordinate $i \in [n]$ is defined as 
\[
\Phi_f(i) := \text{Pr}_{\sigma} \left\{\exists j \in [n]: \sigma(j)=i, f( P_{j-1})=0,f( P_j )=1  \right\} 
\]
\end{definition}

We also give an alternate definition of Shapley value using the notion of boundary of a coordinate.
For a monotone Boolean function $f:\{0,1\}^n \rightarrow \{0,1\}$ and coordinate $i \in [n]$, let $\mathcal{B}_f(i)$ denote the boundary of the coordinate $i$ i.e.,
\[
\B_f(i):=\{S \subseteq [n]\setminus\{i\} : f(\{i\} \cup S)=1, f(S)=0\}
\]
By the monotonicity of $f$, we can infer that $\B_f(i)$ satisfies the following sandwich property that will be useful later. 
\begin{proposition}
\label{prop:sandwich}
Let $f: \{ 0,1\}^n \rightarrow \{0,1\}$ be a monotone Boolean function and let $i \in [n]$. Then, for every pair of sets $S_1, S_2 \in \B_f(i)$ with $S_1 \subseteq S_2$, we have $S \in \B_f(i)$ for all $S$ such that $S_1 \subseteq S \subseteq S_2$.
\end{proposition}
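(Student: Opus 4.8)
The plan is to verify the two defining membership conditions of $\B_f(i)$ separately, one using $S_1$ and the other using $S_2$, with monotonicity doing all the work. Fix $S$ with $S_1 \subseteq S \subseteq S_2$. First I would note that $S$ is automatically a subset of $[n]\setminus\{i\}$: by definition of $\B_f(i)$ we have $S_2 \subseteq [n]\setminus\{i\}$, and $S \subseteq S_2$. So the only content is to check $f(\{i\}\cup S) = 1$ and $f(S) = 0$.

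For the first, since $S_1 \in \B_f(i)$ we have $f(\{i\}\cup S_1) = 1$; from $S_1 \subseteq S$ we get $\{i\}\cup S_1 \subseteq \{i\}\cup S$, so monotonicity of $f$ gives $f(\{i\}\cup S) \ge f(\{i\}\cup S_1) = 1$, hence $f(\{i\}\cup S) = 1$. For the second, since $S_2 \in \B_f(i)$ we have $f(S_2) = 0$; from $S \subseteq S_2$ and monotonicity we get $f(S) \le f(S_2) = 0$, hence $f(S) = 0$. Together these two facts say exactly that $S \in \B_f(i)$, which completes the argument.

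There is no real obstacle here: the proposition is an immediate consequence of monotonicity, and the only thing to be careful about is orienting the two set inclusions correctly (use the smaller set $S_1$ to push the value $1$ upward through $\{i\}\cup S$, and the larger set $S_2$ to push the value $0$ downward through $S$). I would present it as a two-sentence proof invoking monotonicity twice.
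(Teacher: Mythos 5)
Your proof is correct and follows exactly the same approach as the paper: use monotonicity with $S_1 \subseteq S$ to get $f(\{i\}\cup S) = 1$, and monotonicity with $S \subseteq S_2$ to get $f(S) = 0$.
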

\begin{proof}
By the monotonicity of $f$, we have $f(S \cup \{i\})\geq f(S_1 \cup \{i\})=1$, and thus, $f(S \cup \{i\})=1$. Similarly, we have $f(S)\leq f(S_2)=0$, and thus, $f(S)=0$.
\end{proof}
For an index $j \in \{0,1,\ldots,n-1\}$, 
let $\mu_f(j)^{(i)}$ denote the fraction of subsets of $[n]$ of size $j$ that are in $\B_f(i)$ i.e., 
\[
\mu_f(j)^{(i)}:=\left|\B_f(i) \cap \tbinom{[n]}{j}\right| \Big/ \tbinom{n}{j}.
\]
We can rewrite the definition of Shapley value of the $i$th coordinate as the following~\cite{Weber77}:
\begin{equation}
\label{eq:si-alternate}
\si_f(i)=\frac{\sum_{j=0}^{n-1} \mu_f(j)^{(i)}}{n}  \ .
\end{equation}

\section{Algorithm when Shapley values are small}
\label{sec:alg}

In this section, we show that monotone Boolean functions where each coordinate has bounded Shapley value has arbitrarily large threshold functions as minors, thereby proving the algorithmic part of our dichotomy result. 

Let $L$ be a positive integer and $0 \leq \tau\leq L+1$ be a non-negative integer. We let $\textsf{THR}_{L,\tau} : \{ 0, 1 \} ^ L \rightarrow \{ 0,1\}$ be the threshold function on $L$ variables with threshold $\tau$. More formally, 
\[
\textsf{THR}_{L,\tau}(\textbf{x}) := 
\begin{cases}
1 \text{ if }\textsf{hw}(\textbf{x})\geq \tau \\ 
0 \text { otherwise.}
\end{cases}
\]

For a monotone Boolean function $f:\{0,1\}^n \rightarrow \{0,1\}$ and real number $p\in [0,1]$, let $P_{p}(f)$ denote the expected value of $f(x)$ where each element $x_i, i \in [n]$ is independently set to be $1$ with probability $p$ and $0$ with probability $1-p$. For every monotone function $f$, the function $P_p(f)$ is a strictly monotone continuous function in $p$ on the interval $[0,1]$. The value $p_c=p_c(f)$ at which $P_{p_c}(f)=\frac 12$ is called the \textit{critical probability} of $f$. 

Using the Russo-Margulis Lemma~\cite{Russo1982,Margulis74} and Poincar\'{e} Inequality, we can show the following lemma that we need later.  
\begin{lemma}[Exercise 8.29(e) in~\cite{analysisOdonnell}]
\label{lem:russo}
Let $f$ be a non-constant monotone Boolean function with critical probability $p_c \le \frac 12$. Let $p_1 := \frac{1}{(2\nu)^2}p_c$ for $\nu > 0$. If $p_1 \leq \frac 12$, then $P_{p_1}(f)\geq 1-\nu$.
\end{lemma}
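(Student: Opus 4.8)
The plan is to recast the statement as a first-order differential inequality for $g(p):=P_p(f)$ and integrate it. Recall that for a non-constant monotone $f$, $g$ is a strictly increasing polynomial on $[0,1]$ with $g(0)=0$, $g(1)=1$ and $g(p_c)=\tfrac12$. Two classical inputs are needed. First, the Russo--Margulis lemma gives $g'(p)=\sum_{i=1}^{n}\mathrm{Inf}_i^{(p)}[f]$, where $\mathrm{Inf}_i^{(p)}[f]$ is the probability, when each coordinate is set independently to $1$ with probability $p$, that flipping coordinate $i$ changes the value of $f$. Second, the Efron--Stein inequality applied to this $p$-biased product measure (a form of the Poincar\'e inequality) gives, for $\{0,1\}$-valued $f$, that $\mathrm{Var}_p[f]\le p(1-p)\sum_{i=1}^{n}\mathrm{Inf}_i^{(p)}[f]$; the factor $p(1-p)$ arises because, conditioned on all coordinates other than $i$, $f$ is either constant or, up to relabeling, a $\mathrm{Bernoulli}(p)$ random variable of variance $p(1-p)$. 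Since $\mathrm{Var}_p[f]=g(p)(1-g(p))$, combining the two and using $1-p\le 1$ yields the key inequality
\[
g(p)\bigl(1-g(p)\bigr)\ \le\ p(1-p)\,g'(p)\ \le\ p\,g'(p)\qquad\text{for all }p\in(0,1).
\]

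I would then integrate this over the interval $[p_c,p_1]$. Observe that $p_1=\tfrac{1}{(2\nu)^2}p_c\ge p_c$ exactly when $\nu\le\tfrac12$ (the case $\nu\ge1$ being vacuous, as then $P_{p_1}(f)\ge0\ge1-\nu$), so assume $\nu\le\tfrac12$; then $0<p_c\le p_1\le\tfrac12<1$ and every quantity below is well-defined, in particular $1-g(p_1)>0$. For $p\in[p_c,p_1]$ we have $g(p)\ge g(p_c)=\tfrac12$, hence $g(p)(1-g(p))\ge\tfrac12(1-g(p))$; combined with the key inequality this gives $\dfrac{g'(p)}{1-g(p)}\ge\dfrac{1}{2p}$. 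Integrating from $p_c$ to $p_1$ and using $g(p_c)=\tfrac12$,
\[
\ln\!\frac{1/2}{1-g(p_1)}\ =\ \int_{p_c}^{p_1}\frac{g'(p)}{1-g(p)}\,dp\ \ge\ \int_{p_c}^{p_1}\frac{dp}{2p}\ =\ \frac12\ln\!\frac{p_1}{p_c}\ =\ \ln\!\frac{1}{2\nu},
\]
and exponentiating gives $\dfrac{1/2}{1-g(p_1)}\ge\dfrac{1}{2\nu}$, i.e.\ $1-g(p_1)\le\nu$, which is exactly $P_{p_1}(f)=g(p_1)\ge1-\nu$.

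The two analytic inputs --- the Russo--Margulis formula and the Efron--Stein/Poincar\'e bound --- I would simply quote from~\cite{analysisOdonnell}, so the only real content is the differential-inequality manipulation, and there the single point needing care is the constants: one must keep the sharp factor $p(1-p)$ and weaken it only through $p(1-p)\le p$, and integrate the exact bound $\tfrac{1}{2p}$, so that $\tfrac12\ln(p_1/p_c)$ collapses to precisely $\ln\frac{1}{2\nu}$ and reproduces the stated $p_1$. The only genuine gap in the argument is the range $\tfrac12<\nu<1$, where $p_1<p_c$ and this integration direction is unavailable; since the lemma is only ever applied with $\nu$ close to $0$ (the sharp-threshold regime where $P_{p_1}(f)\to1$), I would restrict attention to $\nu\le\tfrac12$.
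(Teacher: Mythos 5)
The paper gives no proof of this lemma; it cites Exercise~8.29(e) of O'Donnell's \emph{Analysis of Boolean Functions}, and your argument is exactly the intended solution of that exercise, using precisely the two tools the paper names (Margulis--Russo and the $p$-biased Poincar\'e/Efron--Stein inequality) to get $g(p)\bigl(1-g(p)\bigr)\le p(1-p)\,g'(p)$ and then integrating $g'/(1-g)\ge 1/(2p)$ over $[p_c,p_1]$, with the constants collapsing to $\ln\frac{1}{2\nu}$ as you compute. The steps are all correct.

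One remark of yours deserves sharpening. You describe the range $\tfrac12<\nu<1$ as ``the only genuine gap in the argument,'' but in fact the lemma is simply \emph{false} there as stated: take $f(\mathbf{x})=x_1$, so $P_p(f)=p$ and $p_c=\tfrac12$, and set $\nu=\tfrac34$; then $p_1=\tfrac{1}{(3/2)^2}\cdot\tfrac12=\tfrac29\le\tfrac12$, yet $P_{p_1}(f)=\tfrac29<\tfrac14=1-\nu$. So the restriction $\nu\le\tfrac12$ (equivalently $p_1\ge p_c$) is not a deficiency of your integration direction but a hypothesis that is genuinely needed and is missing from the paper's statement of the lemma. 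This is harmless for the paper's purposes, since the lemma is invoked only in Case~1 of Lemma~\ref{lem:threshold-minor} with $\nu=\tfrac{1}{2L}$ for $L\ge 2$, comfortably inside $\nu\le\tfrac12$.
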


We now define the threshold interval of $f$. 
\begin{definition}
For a monotone function $f$ and $0<\epsilon<\frac{1}{2}$, we define $T_{\epsilon}(f) := p_2 - p_1,$
where~$p_2$ and~$p_1$ are such that $P_{p_1}(f)=\epsilon, P_{p_2}(f)=1-\epsilon$. 
\end{definition}
Kalai~\cite{Kalai04} proved the following result regarding monotone Boolean functions.
\begin{theorem}
\label{thm:kalai-threshold}
For every $a,\epsilon, \gamma >0$, there exists $\delta := \delta(a,\epsilon,\gamma)>0$ such that for every monotone Boolean function $f:\{0,1\}^n \rightarrow \{0,1\}$ with $\si_f(i)\leq \delta$ for all $i \in [n]$ and $a \leq p_c(f) \leq 1-a$, then $T_{\epsilon}(f)\leq \gamma$. 
\end{theorem}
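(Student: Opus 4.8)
\textbf{Proof proposal for \Cref{thm:kalai-threshold}.}

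The plan is to deduce this from the Russo--Margulis / Poincar\'e estimate already recorded as \Cref{lem:russo}, by applying it both to $f$ and to the ``dual'' function $f^\dagger(\mathbf{x}) := 1 - f(\mathbf{1}-\mathbf{x})$, and iterating the amplification. The first observation is that \Cref{lem:russo} is really a one-shot sharp-threshold statement \emph{provided} we can control the total influence (edge boundary) of $f$ at the critical probability; and under the hypotheses of the present theorem the total influence near $p_c$ is small, because the total influence is (up to constants) the sum of the coordinate influences, each of which is comparable to the Shapley value (or can be bounded using the Kalai-type Fourier/entropy argument). The real content of Kalai's theorem is that a uniform bound $\si_f(i) \le \delta$ forces small total influence in a window around $p_c$, which is exactly what turns \Cref{lem:russo} into a quantitative sharp-threshold bound $T_\epsilon(f) \le \gamma$.

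Concretely, I would proceed as follows. Step 1: recall that $\frac{d}{dp} P_p(f)$ equals the total influence $\mathrm{Inf}^{(p)}(f)$ of $f$ under the $p$-biased measure (Margulis--Russo), so that $T_\epsilon(f) = p_2 - p_1 = \int$ over the window of $1/\mathrm{Inf}^{(p)}(f)\,d(P_p(f))$ is small once $\mathrm{Inf}^{(p)}(f)$ is large throughout $[p_1,p_2]$. Step 2: use the assumption $a \le p_c(f) \le 1-a$ together with the hypothesis $\si_f(i) \le \delta$ to argue that the $p$-biased influences are small for every $i$; here one invokes Kalai's inequality relating the $p$-biased total influence to $\log(1/\max_i \mathrm{Inf}_i)$ (a Bourgain--Kalai / Friedgut--Kalai style bound), which forces $\mathrm{Inf}^{(p)}(f) \gtrsim \log(1/\delta')$ for $\delta'$ a suitable function of $\delta$ and $a$, uniformly over $p$ in a neighbourhood of $p_c$ of width $\Omega(a)$. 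Step 3: feed this lower bound back into the Margulis--Russo integral: since $P_{p_c}(f) = \tfrac12$, we start the window at $p_c$, and the width needed to push $P_p(f)$ from $\tfrac12$ to $1-\epsilon$ (resp. down to $\epsilon$) is at most $O\!\big(\log(1/\epsilon) / \log(1/\delta')\big)$. Step 4: choose $\delta = \delta(a,\epsilon,\gamma)$ small enough that this bound is below $\gamma$, and check that the window stays inside $[0,1]$ using $a \le p_c \le 1-a$ (this is where \Cref{lem:russo}, applied to $f$ and to $f^\dagger$, is convenient: it certifies that once we are a controlled multiplicative factor away from $p_c$ on either side, $P_p(f)$ is already within $\epsilon$ of $0$ or $1$, so the window is genuinely short rather than merely having large derivative).

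I expect the main obstacle to be Step 2: translating the hypothesis on \emph{Shapley values} (an integral, over $p$, of $p$-biased influences, per \Cref{eq:si-alternate}) into a pointwise bound on the $p$-biased influence $\mathrm{Inf}_i^{(p)}(f)$ for the relevant range of $p$. A uniform bound on $\si_f(i)$ does not literally bound $\mathrm{Inf}_i^{(p)}(f)$ at a single $p$; one must use monotonicity of $f$ and the fact that $p_c$ is bounded away from $0$ and $1$ to show that the measure $\mu_f(j)^{(i)}$ is spread out enough that $\si_f(i) \le \delta$ does give $\mathrm{Inf}_i^{(p)}(f) \le \delta''$ for $p$ within $O(a)$ of $p_c$ (this is essentially a reverse-Markov / log-concavity argument on the sequence $\mu_f(j)^{(i)}$, using \Cref{prop:sandwich}). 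Once that pointwise bound on the max influence is in hand, the Bourgain--Kalai inequality and the Margulis--Russo integration in Steps 1, 3, 4 are routine, and the dependence $\delta(a,\epsilon,\gamma)$ comes out of tracking constants. (An alternative to re-deriving all of this would be to cite Kalai~\cite{Kalai04} directly for the statement; but since the excerpt attributes the theorem to Kalai, the expected ``proof'' here is likely just the reduction sketched above together with that citation.)
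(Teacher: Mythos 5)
The paper does not prove this theorem; it is stated as Kalai's result and used as a black box, so there is no in-paper proof to compare against. Your closing parenthetical---that the expected ``proof'' here is simply the citation to~\cite{Kalai04}---is exactly right.

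That said, the reduction you sketch does not close, and the obstacle is precisely the one you flag in Step~2, except that the repair you propose does not work. A uniform bound $\si_f(i)\le\delta$ is, via the identity $\si_f(i)=\int_0^1 \mathrm{Inf}_i^{(p)}(f)\,dp$, only an \emph{averaged} control on the $p$-biased influences; it is entirely compatible with $\mathrm{Inf}_i^{(p_c)}(f)=\Omega(1)$. Concretely, consider the ``banded dictator'': with $m=\tfrac n2-\tfrac{\sqrt n}{2}$ and $w=\sqrt n$, define $f(x_1,\mathbf z)=1$ if $|\mathbf z|>m+w$, $f=0$ if $|\mathbf z|<m$, and $f=x_1$ if $m\le|\mathbf z|\le m+w$. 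This is monotone, $p_c(f)\approx\tfrac12$, and every coordinate has Shapley value $O(1/\sqrt n)$ (coordinate $1$ has $\mu_f(j)^{(1)}=1$ exactly on a band of $\Theta(\sqrt n)$ levels, the rest $O(1/n)$), yet $\mathrm{Inf}_1^{(1/2)}(f)$ is a fixed positive constant because the $\mathrm{Bin}(n{-}1,\tfrac12)$ mass on that band is $\Theta(1)$. So the implication ``small Shapley values $\Rightarrow$ small pointwise $p$-biased influences near $p_c$'' is simply false, and Friedgut--Kalai cannot be invoked the way you want. Your suggested fix via ``log-concavity'' of $j\mapsto\mu_f(j)^{(i)}$ also does not follow from \Cref{prop:sandwich}: that proposition only says the boundary of a fixed coordinate is sandwich-closed along a chain; it does not prevent the boundary from being concentrated in a narrow Hamming band (indeed the example above has exactly such concentration). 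Note that the conclusion $T_\epsilon(f)\le\gamma$ still holds in the example (the threshold window is $O(1/\sqrt n)$), so the theorem is fine; it is the route through pointwise influence bounds that breaks. Kalai's actual argument is genuinely different---it has to exploit the averaging over $p$ built into the Shapley value rather than reducing to a single-$p$ influence bound---and there is no elementary shortcut here; citing~\cite{Kalai04} directly, as the paper does, is the right move.
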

We will use this result to show that for every monotone function where each coordinate has bounded Shapley value has arbitrarily large threshold functions as minor. 

\begin{lemma}
\label{lem:threshold-minor}
For every $L\geq 2$, there exists a $\delta := \delta(L) >0$ such that the following holds. 
For any monotone Boolean function $f:\{0,1\}^n\rightarrow \{0,1\}$ with 
\[
\si_f(i) \leq \delta \ \, \forall i \in [n]
\]
there exists a positive integer $L' \in \{ L, L+1\}$ and a non-negative integer $\tau$ such that $\emph{\textsf{THR}}_{L',\tau}$ is a minor of~$f$. 
\end{lemma}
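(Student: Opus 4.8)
The plan is to produce $\textsf{THR}_{L',\tau}$ as the minor of $f$ that identifies variables in \emph{blocks}. Partition $[n]$ into $L'$ blocks $V_1,\dots,V_{L'}$ of sizes $\lfloor n/L'\rfloor$ or $\lceil n/L'\rceil$, let $\pi\colon[n]\to[L']$ send $V_b$ to $b$, and let $g$ be the minor of $f$ with respect to $\pi$; thus $g(x)=f(y)$ where $y_i=x_{\pi(i)}$. We choose the partition \emph{uniformly at random} among partitions with the prescribed block sizes and show that with positive probability $g=\textsf{THR}_{L',\tau}$ for a suitable $\tau$, which proves the lemma. The key point is that for a fixed $x\in\bits^{L'}$ with $\textsf{hw}(x)=j$, the vector $y$ fed to $f$ is the indicator of the union of $j$ random blocks, hence a uniformly random subset of $[n]$ of a fixed size $s_j$ with $|s_j-jn/L'|\le L'$; consequently $\Pr_{\mathrm{part}}[g(x)=1]=\E[f(S)]$ for $S$ uniform of size $s_j$, which is nondecreasing in $s_j$ and, by binomial concentration, within $e^{-\Omega(n/L^2)}$ of $P_{j/L'}(f)$. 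Two reductions: we may assume $f$ is non-constant (otherwise the claim is trivial, taking $\tau=0$ if $f\equiv1$), so $f(0^n)=0$, $f(1^n)=1$, and $p\mapsto P_p(f)$ is strictly increasing; and since $\sum_{i\in[n]}\si_f(i)=1$ (exactly one coordinate is pivotal on each maximal chain $0^n\to1^n$), the hypothesis $\si_f(i)\le\delta$ forces $n\ge1/\delta$, which is large once $\delta$ is small and absorbs all $e^{-\Omega(n)}$ errors below.

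Given the above, it suffices to choose $L'\in\{L,L+1\}$ and $\tau$ so that $P_{j/L'}(f)$ is tiny for $j<\tau$ and near $1$ for $j\ge\tau$; we split on $p_c:=p_c(f)$. First suppose $\alpha\le p_c\le1-\alpha$ for a small $\alpha=\alpha(L)$ fixed later. Applying \Cref{thm:kalai-threshold} with this $\alpha$ and with small $\epsilon,\gamma$ (also fixed later), for $\delta$ small enough in terms of $(\alpha,\epsilon,\gamma)$ we get $T_\epsilon(f)\le\gamma$, i.e.\ a window $[p_1,p_2]$ with $P_{p_1}(f)=\epsilon$, $P_{p_2}(f)=1-\epsilon$, $p_1<p_c<p_2$, and $p_2-p_1\le\gamma$. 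Here the freedom in $L'$ is essential. Let $G_m:=\{k/m:0\le k\le m\}$; then $G_L\cap G_{L+1}=\{0,1\}$, and two distinct points $a\in G_L$, $b\in G_{L+1}$ satisfy $|a-b|\ge 1/(L(L+1))$ (the numerator of $a-b$ over $L(L+1)$ is a nonzero integer). A short triangle-inequality argument then gives $\max\bigl(\mathrm{dist}(p_c,G_L),\,\mathrm{dist}(p_c,G_{L+1})\bigr)\ge\rho:=\min\bigl(\alpha,\tfrac1{2L(L+1)}\bigr)$ (if the two nearest grid points differ they are $\ge1/(L(L+1))$ apart, so one is $\ge\tfrac1{2L(L+1)}$ from $p_c$; if they coincide they lie in $\{0,1\}$, and $p_c\in[\alpha,1-\alpha]$). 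Take $L'$ to realize this maximum, let $(k/L',(k+1)/L')$ be the grid interval of $G_{L'}$ containing $p_c$ (both endpoints then at distance $\ge\rho$ from $p_c$), and choose $\gamma<\rho$, so that $[p_1,p_2]\subseteq[p_c-\gamma,p_c+\gamma]$ sits strictly inside $(k/L',(k+1)/L')$ with a gap $\ge\rho-\gamma$ to each endpoint. Setting $\tau:=k+1\in\{1,\dots,L'\}$ and using strict monotonicity of $p\mapsto P_p(f)$ yields $P_{j/L'}(f)\le\epsilon$ for $j\le k$ and $P_{j/L'}(f)\ge1-\epsilon$ for $j\ge k+1$, and the gap $\rho-\gamma$ leaves room for the $O(1/L')$ slack in $s_j$ and for the concentration error.

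The remaining range is $p_c<\alpha$ (and symmetrically $p_c>1-\alpha$, which reduces to it by replacing $f$ with $x\mapsto1-f(\mathbf{1}-x)$: this is monotone with the same Shapley values, and turns a $\textsf{THR}_{L',1}$ minor into a $\textsf{THR}_{L',L'}$ minor). Now \Cref{thm:kalai-threshold} does not apply, but \Cref{lem:russo} already gives what we need: with $\nu=\nu(L)$ a small constant it yields $P_p(f)\ge1-\nu$ whenever $p\ge p_c/(2\nu)^2$, and if $\alpha$ is chosen small enough that $p_c/(2\nu)^2\le\tfrac1{2(L+1)}\le 1/(2L')$, then $P_{j/L'}(f)\ge P_{1/(2L')}(f)\ge1-\nu$ for all $j\ge1$, while $P_0(f)=f(0^n)=0$. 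This is the desired behaviour with $\tau=1$.

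Finally one collects the union bound. In either case, at each of the $2^{L'}\le2^{L+1}$ inputs $x$ we have $\Pr_{\mathrm{part}}[g(x)\neq\textsf{THR}_{L',\tau}(x)]\le\max(\epsilon,\nu)+e^{-\Omega(n/L^2)}$, via the concentration transfer from $P_{j/L'}(f)$ to $\E[f(S)]$ together with monotonicity of the latter in $|S|$ (and $n\ge1/\delta$ makes the exponential term negligible). Taking $\epsilon,\nu<2^{-L-3}$ makes the union bound over all $x$ strictly less than $1$, so some partition avoids the bad event and for it $g=\textsf{THR}_{L',\tau}$. The main difficulty is not any single step but the calibration of constants: one fixes $\nu$, then $\alpha$ (hence $\rho$), then $\epsilon$ and $\gamma<\rho$, then obtains $\delta$ from \Cref{thm:kalai-threshold}, and must verify that every inequality in the chain closes simultaneously — in particular the grid-separation estimate that lets one of $L,L+1$ ``straddle'' $p_c$ cleanly; all the genuine analytic content is packaged in \Cref{thm:kalai-threshold} and \Cref{lem:russo}, which are used as black boxes.
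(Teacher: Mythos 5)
Your plan is essentially the paper's: split on $p_c(f)$, use \Cref{thm:kalai-threshold} and \Cref{lem:russo} as black boxes, exploit the $G_L$/$G_{L+1}$ separation to place the narrow threshold window strictly inside one grid cell, and then show a random minor equals $\textsf{THR}_{L',\tau}$ with positive probability. The real deviation is your choice of random minor: you identify coordinates into $L'$ \emph{balanced} blocks, whereas the paper simply picks $\pi(i)\in[L']$ i.i.d.\ uniformly for each $i\in[n]$. With the i.i.d.\ choice, for $x$ with $\textsf{hw}(x)=j$ the $n$ bits fed to $f$ are i.i.d.\ $\mathrm{Bernoulli}(j/L')$, so $\Pr_\pi[g(x)=1]=P_{j/L'}(f)$ \emph{exactly} and the union bound closes with no further analysis.

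Your balanced-block version forces a hypergeometric-to-binomial transfer, and the bound you assert for it is false as stated. Writing $a_s:=\E\bigl[f(S)\bigr]$ for $S$ uniform of size $s$, you claim $a_{s_j}$ is ``within $e^{-\Omega(n/L^2)}$ of $P_{j/L'}(f)$'' by binomial concentration. This fails whenever $j/L'$ sits near the critical window: for $f=\textsf{THR}_{n,n/2}$ and $j/L'=1/2$ one has $a_{n/2}=1$ while $P_{1/2}(f)=\tfrac12+o(1)$, a constant gap. What the argument actually needs (and what does hold, using the slack $\rho-\gamma$ you set aside) is only a one-sided comparison via monotonicity of $s\mapsto a_s$: if $s_j\le p_1 n-\Omega(n)$ then $P_{p_1}(f)\ge a_{s_j}\Pr[\mathrm{Bin}(n,p_1)\ge s_j]\ge a_{s_j}\bigl(1-e^{-\Omega(n)}\bigr)$, hence $a_{s_j}\le 2\epsilon$, and symmetrically $a_{s_j}\ge P_{p_2}(f)-e^{-\Omega(n)}$ on the other side. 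You gesture at this (``together with monotonicity of the latter in $|S|$''), so the intent may be right, but the two-sided $e^{-\Omega(n/L^2)}$ statement should be removed. The cleanest fix is to drop balanced blocks and use i.i.d.\ $\pi$, which makes the issue disappear and coincides with the paper's proof.
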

\begin{proof}
We prove the lemma by taking a uniformly random minor of $f$.

We first obtain $\delta := \delta(L)>0$ from~\Cref{thm:kalai-threshold} by setting $\epsilon = \frac{1}{2^{L+1}}, \gamma = a = \frac{1}{L^3}$.
Our goal is to show that for this parameter $\delta$, for every monotone Boolean function $f$ with each coordinate having Shapley value at most $\delta$, there exists $L' \in \{ L, L+1\}$ and $\tau$ such that $\textsf{THR}_{L',\tau}$ is a minor of~$f$. 

We assume that $f$ is a non-constant function, else we have a trivial minor by setting $\tau = 0 $ or $\tau = L'$.
Let $p_c$ be the critical probability of $f$. 

\subparagraph{Case 1: $p_c < a=\frac{1}{L^3}$.}

Let $p_1 = L^2p_c < \frac{1}{L}$. Using~\Cref{lem:russo}, we can conclude that $P_{p_1}(f)\geq 1-\frac{1}{2L}$. As $P_p(f)$ is monotone, we get that $P_{\frac 1L}(f)> 1 - \frac{1}{2L}$.
We let $g: \{ 0,1\}^L \rightarrow \{ 0,1\}$ be a uniformly random minor of $f$ i.e., we choose the function $\pi : [n] \rightarrow [L]$ by choosing each value $\pi(i)$ uniformly and independently at random from $[L]$, and we let $g$ to be the minor of $f$ with respect to $\pi$.

Note that for every $i \in [L]$, the distribution of $g(\{i\})$ over the  random minor $g$ is the same as sampling a random input to $f$ where we set each bit to $1$ with probability $\frac{1}{L}$. As $P_{\frac 1L}(f) \geq 1-\frac{1}{2L}$, we get that for each $i \in [L]$, $g(\{i\})=1$ with probability at least $1-\frac{1}{2L}$. By union bound, with probability at least $\frac{1}{2}$, $g(\{i\})=1$ for all $i \in [L]$. As $f(0,0,\ldots,0)=0$, $g(\phi)=0$ as well. Thus, with probability at least $\frac 12$, $g=\textsf{THR}_{L,1}$. Hence, $\textsf{THR}_{L,1}$ is a minor of~$f$. 

\subparagraph{Case 2: $p_c > 1-a=1-\frac{1}{L^3}$.}

Let $f^{\dagger}$ be the Boolean dual of $f$ defined as $f^\dagger (x) = 1-f(\overline{x})$. Note that $P_p(f^\dagger)=1-P_{1-p}(f)$ for all $p \in [0,1]$. Thus, $p_c(f^\dagger)=1-p_c < a$. Using the previous case, we can infer that $\textsf{THR}_{L,1}$ is a minor of $f^\dagger$ with respect to a funtion $\pi : [n] \rightarrow [L]$. The same function $\pi $ proves that $\textsf{THR}_{L,1}^{\dagger} = \textsf{THR}_{L,L}$ is a minor of $f$. 

\subparagraph{Case 3: $a \leq p_c \leq 1 - a $.}

Using~\Cref{thm:kalai-threshold}, we obtain $p_1$ such that $P_{p_1}(f)\leq \epsilon$, and $P_{p_1 + \gamma} \geq 1 -\epsilon$, where $\epsilon = \frac{1}{2^{L+1}}, \gamma = \frac{1}{L^3}$. 
As $\gamma < \frac{1}{L(L+1)}$, there exists $L' \in \{ L, L+1\}$ and $\tau \in [L']$ such that $p_1 + \gamma < \frac{\tau}{L'}$ and $p_1 > \frac{\tau -1}{L'}$. Thus, we get that $P_{\frac{\tau}{L'}}(f) > 1-\epsilon$ and $P_{\frac{\tau -1}{L'}}<\epsilon$. 
Let $g : \{ 0, 1 \}^{L'} \rightarrow \{ 0,1\}$ be a uniformly random minor of $f$ i.e., we choose $\pi : [n] \rightarrow [L']$ by setting each value uniformly and independently at random from $[L']$ and set $g$ to be the minor of $f$ with respect to $\pi$. 
For a vector $\textbf{x} \in \{ 0, 1\}^{L'}$ with $\textsf{hw}(\textbf{x})=\tau$, with probability greater than $1-\frac{1}{2^{L+1}}$, $g(\textbf{x})=1$. Similarly, for $\textbf{x} \in \{0,1\}^{L'}$ with $\textsf{hw}(\textbf{x})=\tau -1$, with probability greater than $1-\frac{1}{2^{L+1}}$, $g(\textbf{x})=0$. Thus, with non-zero probability, $g(\textbf{x})=1$ for all $x \in \{ 0,1\}^{L'}$ with $\textsf{hw}(\textbf{x})=\tau$ and $g(\textbf{x})=0$ for all $\textbf{x} \in \{0,1\}^{L'}$ with $\textsf{hw}(\textbf{x})=\tau -1$. In other words, with non-zero probability, $g$ is equal to $\textsf{THR}_{L',\tau}$. 
Thus, $\textsf{THR}_{L',\tau}$ is a minor of~$f$. 
\end{proof}

Using the existence of arbitrarily large arity threshold minors, the algorithmic part of our Dichotomy result follows immediately.
\begin{theorem}
\label{thm:alg-main}
Let $\Gamma$ be a Promise CSP template. Suppose that for every $\epsilon >0$, there exists a function $f \in \mathrm{Pol}(\Gamma), f:\{0,1\}^n \rightarrow \{0,1\}$ such that $\si_i(f) \leq \epsilon$ for all $i \in [n]$. Then, $\mathrm{PCSP}(\Gamma)\in \emph{\textsf{P}}$.
\end{theorem}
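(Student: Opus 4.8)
The plan is to reduce this statement to two ingredients that are already available: the structural Lemma \ref{lem:threshold-minor}, which produces large threshold minors from polymorphisms with small Shapley value, and the known fact (cited from \cite{BrakensiekGWZ20}) that a PCSP whose polymorphism minion contains symmetric functions of arbitrarily large arity can be solved in polynomial time via the BLP+Affine relaxation. Since threshold functions $\textsf{THR}_{L',\tau}$ are symmetric, it suffices to show that $\mathrm{Pol}(\Gamma)$ contains threshold functions of unbounded arity.

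First I would unwind the quantifiers. Fix an arbitrary target arity $L \geq 2$. Lemma \ref{lem:threshold-minor} supplies a $\delta = \delta(L) > 0$; applying the hypothesis with $\epsilon = \delta$ yields a polymorphism $f \in \mathrm{Pol}(\Gamma)$, $f : \{0,1\}^n \to \{0,1\}$, with $\si_f(i) \leq \delta$ for all $i \in [n]$. Note $f$ is automatically monotone: the hypothesis concerns Ordered PCSPs, so $x \leq y$ is among the relations and every polymorphism must preserve it, which is exactly monotonicity (this is where the ``Ordered'' assumption is used — Shapley value is only defined for monotone functions). Lemma \ref{lem:threshold-minor} then gives $L' \in \{L, L+1\}$ and $\tau$ with $\textsf{THR}_{L',\tau}$ a minor of $f$. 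Since $\mathrm{Pol}(\Gamma)$ is a minion (closed under taking minors, as recalled in \Cref{sec:prelims}), $\textsf{THR}_{L',\tau} \in \mathrm{Pol}(\Gamma)$. As $L$ was arbitrary and $L' \geq L$, this produces threshold polymorphisms of arbitrarily large arity.

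Finally I would invoke the algorithm of \cite{BrakensiekGWZ20}: a PCSP template admitting symmetric polymorphisms of every sufficiently large arity (here, threshold functions, which are symmetric) is solved in polynomial time by the Basic LP with Affine relaxation. Hence $\mathrm{PCSP}(\Gamma) \in \textsf{P}$.

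I expect no genuine obstacle here — the content has been front-loaded into Lemma \ref{lem:threshold-minor} and the external algorithmic result. The only points requiring a word of care are: (i) confirming that the polymorphisms in question are monotone so that Shapley value is well-defined and Lemma \ref{lem:threshold-minor} applies (immediate from the Ordered hypothesis), and (ii) checking that ``threshold functions of arbitrarily large arity'' meets the precise hypothesis of the cited algorithm, which asks for symmetric polymorphisms whose arities are unbounded — satisfied since for each $L$ we obtain one of arity in $\{L, L+1\}$.
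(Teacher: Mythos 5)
Your proof follows exactly the paper's own argument: apply Lemma~\ref{lem:threshold-minor} to obtain threshold minors of unbounded arity, observe that the polymorphism minion is closed under minors and that thresholds are symmetric, then invoke the BLP+Affine algorithm of~\cite{BrakensiekGWZ20}. Your added remark that monotonicity (from the Ordered hypothesis, needed for Shapley value to be defined and for Lemma~\ref{lem:threshold-minor} to apply) is left implicit in the paper is a fair observation but does not change the substance.
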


\begin{proof}
Using~\Cref{lem:threshold-minor}, we can conclude that there are infinitely many positive integers~$L$ such that there exists $\tau \in \{0,1,\ldots, L\}$ with $\mathrm{\textsf{THR}}_{L,\tau}\in \mathrm{Pol}(\Gamma)$. As the threshold functions are symmetric, $\mathrm{Pol}(\Gamma)$ has symmetric polymorphisms of infinitely many arities. Thus, using the BLP+Affine algorithm of~\cite{BrakensiekGWZ20}, $\mathrm{PCSP}(\Gamma)$ can be solved in polynomial time. 
\end{proof}
We remark that the above result is inspired by a special case shown by Barto~\cite{Bar18} that a Boolean Ordered PCSP is polytime tractable if it has cyclic polymorphisms of arbitrarily large arities.

\section{Hardness Assuming Rich 2-to-1 Conjecture}
\label{sec:hardness}
In this section, we prove the hardness part of our dichotomy result. First, we prove that Shapley value is preserved under uniformly random $2$-to-$1$ minors, and then we use this to show the hardness assuming the Rich $2$-to-$1$ Conjecture. 
\subsection{Shapley value under random 2-to-1 minor}

Let $f: \{ 0,1 \}^{2n}\rightarrow \{0,1\}$ be a monotone Boolean function with $\si_f(1)\geq \lambda$ for some absolute constant $\lambda >0$. 
Let $g:\{0,1\}^n \rightarrow \{0,1\}$ be a minor of $f$ with respect to the uniformly random $2$-to-$1$ function $\pi : [2n] \rightarrow [n]$.
Our goal in this subsection is to show that $\E_{\pi}[\si_{g}(\pi(1))]\geq \gamma$ for some function $\gamma := \gamma(\lambda)>0$.
\begin{figure}
\centering
\begin{tikzpicture}[scale=0.4]
\draw (0,6)  node[fill,circle, draw, fill=black,inner sep=0,minimum size=0.15cm,label={\small $1$}](A) {};
\draw (2,6)  node[fill,circle, draw, fill=black,inner sep=0,minimum size=0.15cm,label={\small $2$}](B) {};
\draw (4,6)  node[fill,circle, draw, fill=black,inner sep=0,minimum size=0.15cm,label={\small $3$}](C) {};
\draw (6,6)  node[fill,circle, draw, fill=black,inner sep=0,minimum size=0.15cm,label={\small $4$}](D) {};
\draw (8,6)  node[fill,circle, draw, fill=black,inner sep=0,minimum size=0.15cm,label={\small $5$}](E) {};
\draw (10,6)  node[fill,circle, draw, fill=black,inner sep=0,minimum size=0.15cm,label={\small $6$}](F) {};

\node[] at (-2,6) {$f$};

\draw (1,3)  node[fill,circle, draw, fill=black,inner sep=0,minimum size=0.15cm,label={180:\small $1$}](G) {};
\draw (4,3)  node[fill,circle, draw, fill=black,inner sep=0,minimum size=0.15cm,label={180:\small $2$}](H) {};
\draw (6,3)  node[fill,circle, draw, fill=black,inner sep=0,minimum size=0.15cm,label={180:\small $3$}] (I){};
\draw (8,3)  node[fill,circle, draw, fill=black,inner sep=0,minimum size=0.15cm,label={180:\small $4$}] (J){};
\draw (10,3)  node[fill,circle, draw, fill=black,inner sep=0,minimum size=0.15cm,label={180:\small $5$}] (K){};

\node[] at (-2,3) {$f'$};

\draw (1,0) node[fill,circle, draw, fill=black,inner sep=0,minimum size=0.15cm,label={180:\small $1$}] (L){};
\draw (5,0) node[fill,circle, draw, fill=black,inner sep=0,minimum size=0.15cm,label={180:\small $2$}] (M){};
\draw (9,0) node[fill,circle, draw, fill=black,inner sep=0,minimum size=0.15cm,label={180:\small $3$}] (N){};

\node[] at (-2,0) {$g$};

\draw [->] (A) -- (G);
\draw [->] (B) -- (G);
\draw [->] (C) -- (H);
\draw [->] (D) -- (I);
\draw [->] (E) -- (J);
\draw [->] (F) -- (K);

\draw [->] (G) -- (L);
\draw [->] (H) -- (M);
\draw [->] (I) -- (M);
\draw [->] (J) -- (N);
\draw [->] (K) -- (N);

\end{tikzpicture}

\caption{An illustration of the two step minor approach: Here $f:\{0,1\}^6 \rightarrow \{0,1\}$ is a Boolean function, $f':\{0,1\}^5 \rightarrow \{0,1\}$ is a minor of $f$ with respect to the function $\pi_1 : [6] \rightarrow [5]$ with $\pi_1(i)=\max(i-1,1)$, and~$g$ is a minor of $f'$ with respect to the function $\pi_2 :[5]\rightarrow[3]$ with $\pi_2(i)=\lceil \frac{i+1}{2}\rceil$.}
\label{fig:two-step-minors}
\end{figure}
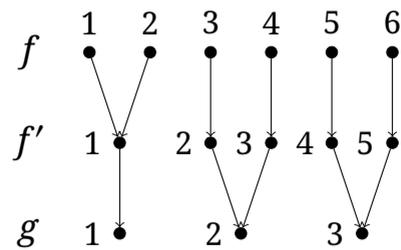
We prove this in two steps. (See~\Cref{fig:two-step-minors})
\begin{enumerate}
    \item First, we consider the minor of $f$, $f' : \{0,1\}^{2n-1} \rightarrow \{0,1\}$ obtained with respect to $\pi_1 : [2n] \rightarrow [2n-1]$ where $\pi_1(1)=\pi_1(2)=1, \pi_1(i)=i-1\, \forall i \in \{3,4,\ldots, 2n\}$. We show that $\si_{f'}(1)\geq \frac{\lambda}{2}$. 
    \item Next, we consider a minor $g$ of $f'$ obtained with respect to the function $\pi_2 : [2n-1] \rightarrow [n]$ which has $\pi_2(1)=1$ while the remaining $2n-2$ values are chosen using a uniformly random partition of $[2n-2]$ into $n-1$ pairs. We show that $\E_{\pi_2}[\si_{g}(1)]\geq \gamma$ for some function $\gamma := \gamma(\lambda)>0$.
\end{enumerate}
The two steps together prove that when $g$ is a minor of $f$ with respect to the function $\pi:[2n]\rightarrow [n]$ that is a uniformly random $2$-to-$1$ function conditioned on the fact that $\pi(1)=\pi(2)$, we have $\E_{\pi}[\si_{g}(\pi(1))]\geq \gamma$ for some function $\gamma := \gamma(\lambda)>0$. Taking average over all $i$ such that $\pi(1)=\pi(i)$, we get the same claim when $g$ is a uniformly random $2$-to-$1$ minor.


The first step is captured by the following lemma.
\begin{lemma}
\label{lem:almost-same-minor}
Let $f:\{0,1\}^{2n} \rightarrow \{0,1\}$ and $f':\{0,1\}^{2n-1} \rightarrow \{0,1\}$ be monotone Boolean functions such that $f'$ is a minor of $f$ with respect to the function $\pi_1 : [2n] \rightarrow [2n-1]$ defined as $\pi_1(i)=\max(i-1,1)$. If $\si_f(1)\geq \lambda$, then  $\si_{f'}(1)\geq \frac{\lambda}{2}$.
\end{lemma}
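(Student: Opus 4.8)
The plan is to translate everything into the language of the boundary sets $\B_\cdot(\cdot)$ and the formula~\eqref{eq:si-alternate}, and then reduce the statement to an elementary term-by-term comparison. First I would set up coordinates. Since $\pi_1(1)=\pi_1(2)=1$ and $\pi_1(i)=i-1$ for $i\ge 3$, we have
\[
f'(x_1,x_2,\dots,x_{2n-1})=f(x_1,x_1,x_2,x_3,\dots,x_{2n-1}),
\]
so coordinate $1$ of $f'$ is exactly the pair $\{1,2\}$ of $f$ merged together, while coordinate $i\ge 2$ of $f'$ corresponds to coordinate $i+1$ of $f$. Identifying a subset $T$ of the $2n-2$ ``common'' coordinates with the corresponding subset of $\{3,\dots,2n\}$, one checks straight from the definitions that $T\in\B_{f'}(1)$ if and only if $f(T)=0$ and $f\big(T\cup\{1,2\}\big)=1$.

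The key step will be a double domination, via monotonicity, of the boundary of coordinate $1$ of $f'$ by that of $f$. For each size $j$, let $A_j$ be the fraction of $j$-subsets $T$ of the common coordinates with $T\in\B_f(1)$, and $B_j$ the fraction with $T\cup\{2\}\in\B_f(1)$. If $T\in\B_f(1)$ then $f(T)=0$ and $f(T\cup\{1,2\})\ge f(T\cup\{1\})=1$, so $T\in\B_{f'}(1)$; and if $T\cup\{2\}\in\B_f(1)$ then $f(T)\le f(T\cup\{2\})=0$ while $f(T\cup\{1,2\})=1$, so again $T\in\B_{f'}(1)$. Hence
\[
\mu_{f'}(j)^{(1)}\ \ge\ \max\{A_j,B_j\}\ \ge\ \tfrac12\big(A_j+B_j\big).
\]
Conversely, every set of $\B_f(1)$ either avoids coordinate $2$ (and is then counted by some $A_j$) or contains it (and, after deleting coordinate $2$, is counted by some $B_{j-1}$); splitting a uniformly random $j$-subset of $f$'s remaining $2n-1$ coordinates according to whether it contains coordinate $2$ therefore expresses $\mu_f(j)^{(1)}$ as a convex combination of $A_j$ and $B_{j-1}$.

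To finish I would substitute these relations into~\eqref{eq:si-alternate} applied to $f'$ (arity $2n-1$) and to $f$ (arity $2n$). After clearing the factors $2n-1$, $2n$, and the binomial normalizations, the desired inequality $\si_{f'}(1)\ge\frac12\si_f(1)$ collapses to an inequality of the shape
\[
\sum_j (j+1)\,A_j\ +\ \sum_j (2n-1-j)\,B_j\ \ge\ 0,
\]
which holds trivially since $A_j,B_j\ge 0$ and all coefficients are positive throughout the admissible range $0\le j\le 2n-2$. Combining with the hypothesis $\si_f(1)\ge\lambda$ gives $\si_{f'}(1)\ge\lambda/2$.

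I expect the main obstacle to be bookkeeping rather than any single inequality: one has to keep the two arities $2n$ and $2n-1$ and their distinct binomial weights in~\eqref{eq:si-alternate} straight, and — conceptually — to see that losing a factor of exactly $2$ is both unavoidable and sufficient. That factor is the price of not knowing which of coordinates $1,2$ of $f$ is the one whose flip carries the value from $0$ to $1$, and it is precisely the ``$B_j$ branch'' (where $f(T\cup\{2\})=0$ forces $f(T)=0$) that lets us still charge the contribution to $\B_f(1)$ rather than to $\B_f(2)$.
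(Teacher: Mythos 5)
Your proposal is correct and is essentially the same argument as the paper's. The paper proves $\binom{2n-1}{j}\mu_f(j)^{(1)}\le\binom{2n-2}{j}\mu_{f'}(j)^{(1)}+\binom{2n-2}{j-1}\mu_{f'}(j-1)^{(1)}$, i.e.\ $\mu_f(j)\le\mu_{f'}(j)+\mu_{f'}(j-1)$, and sums; your $A_j,B_j$ are precisely the two summands of that inequality split out (namely $A_j\le\mu_{f'}(j)$ and $B_{j-1}\le\mu_{f'}(j-1)$), and your final weighted inequality $\sum_j(j+1)A_j+\sum_j(2n-1-j)B_j\ge 0$ is an equivalent rearrangement of the same telescoping/summation step, so the two write-ups differ only in bookkeeping.
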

\begin{proof}
We recall a bit of notation: let $\mathcal{B}_f(1)$ denote the boundary of the coordinate $1$ in the function $f$ i.e., the family of all the sets $S \subseteq [2n]\setminus \{1\}$ such that $f(S)=0, f(S\cup \{1\})=1$. For an integer $j \in \{0,1,\ldots,2n-1\}$, let $\mu_f(j)^{(1)}$ denote the fraction of subsets of $[2n]\setminus\{1\}$ of size $j$ that are in $\mathcal{B}_f(1)$.
For ease of notation, we let $\mu(j)=\mu_f(j)^{(1)}$, and $\mu'(j)=\mu_{f'}(j)^{(1)}$.
Consider a set $S \subseteq [2n]\setminus \{1\}$ such that $S \in \B_f(1)$. Note that 
\[
S' = \{ i-1 : i > 2, i \in S\}
\]
satisfies $S' \in \B_{f'}(1)$. Suppose that $S_1, S_2 \in \B_f(1)$ such that $|S_1|=|S_2|=j$, $S_1 \neq S_2$ and $2 \notin S_1 \cup S_2$. Then, the above definition satisfies $S'_1 \neq S'_2$, $S'_1, S'_2 \in \B_{f'}(1)$ and $|S'_1|=|S'_2|=j$. This implies that 
\[
\Big| \{ S : S \in \B_f(1), |S|=j, 2 \notin S\}\Big| \leq \left| \B_{f'}(1) \cap \binom{[2n-1]\setminus\{1\}}{j} \right|
\]
Similarly, 
\[
\Big| \{ S : S \in \B_f(1), |S|=j, 2 \in S\}\Big| \leq \left| \B_{f'}(1) \cap \binom{[2n-1]\setminus\{1\}}{j-1} \right|
\]
Summing the two, we obtain that 
\[
\Big| \{ S : S \in \B_f(1), |S|=j\}\Big| \leq \left| \B_{f'}(1) \cap \binom{[2n-1]\setminus\{1\}}{j-1} \right|
+ \left| \B_{f'}(1) \cap \binom{[2n-1]\setminus\{1\}}{j} \right|
\]
We can rewrite it as 
\[
    \binom{2n-1}{j}\mu(j)\leq \binom{2n-2}{j}\mu'(j)+\binom{2n-2}{j-1}\mu'(j-1) \, \forall j \in [2n-2]
\]

As $\binom{2n-1}{j}=\binom{2n-2}{j}+\binom{2n-2}{j-1}$ for every $j\in [2n-2]$, we get that 
\[
\mu(j) \leq \mu'(j) + \mu'(j-1)
\]
for all $j \in [2n-2]$.
Also note that $\mu(0)= \mu'(0)$, and $\mu(2n-1)= \mu'(2n-2)$. 
Summing over all these inequalities, we get that 
\[
\sum_{j \in \{0,1,\ldots, 2n-2\}} \mu'(j) \geq \frac{1}{2} \sum_{j \in \{0,1,\ldots,2n-1\}} \mu(j) \geq \frac{ \lambda (2n)}{2} = n \lambda
\]
Thus, 
\[
\si_{f'}(1) = \frac{\sum_{j \in \{0,1,\ldots,2n-2\}} \mu'(j)}{2n-1} \geq \frac{\lambda}{2} \  . \qedhere
\]
\end{proof}

Before proving the second step, we prove the following key lemma regarding the distribution of the boundary subsets. 
\begin{lemma}
\label{lem:even}
Let $f':\{0,1\}^{2n-1}\rightarrow \{0,1\}$ be a monotone Boolean function such that $\si_{f'}(1)=\lambda$ with $\lambda \geq \frac{1}{n}$. For an integer $j \in \{0,1,\ldots, 2n-2\}$, let $\mu'(j)=\mu_{f'}(j)^{(1)}$.
Then, there exists an absolute constant $\gamma := \gamma(\lambda)>0$ such that 
\[
\frac{\sum_{j=0}^{n-1} \mu'(2j)}{n}\geq \gamma
\]
\end{lemma}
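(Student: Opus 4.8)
\textbf{Proof plan for \Cref{lem:even}.}

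The statement asserts that a constant fraction (depending only on $\lambda$) of the ``weight'' $\sum_j \mu'(j) = n\lambda$ sits on \emph{even} levels $j = 2j'$. The plan is to exploit the sandwich property of \Cref{prop:sandwich}: within each ``column'' of the Boolean lattice, $\B_{f'}(1)$ looks like a union of intervals, and an interval of consecutive levels always contains a roughly equal number of even and odd levels. First I would pass from the level-fractions $\mu'(j)$ back to a statement about individual chains: fix a uniformly random maximal chain $\emptyset = T_0 \subsetneq T_1 \subsetneq \cdots \subsetneq T_{2n-1} = [2n-1]\setminus\{1\}$ (equivalently a uniformly random permutation of $[2n-1]\setminus\{1\}$), and let $X$ be the number of indices $j$ with $T_j \in \B_{f'}(1)$. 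By linearity and the definition of $\mu'$, $\E[X] = \sum_{j=0}^{2n-2}\mu'(j) = (2n-1)\lambda$. By \Cref{prop:sandwich} applied along the chain, the set of such indices $j$ forms a \emph{single contiguous block} $\{a, a+1, \ldots, b\}$ (possibly empty): once the chain enters $\B_{f'}(1)$ it stays until it leaves, because $f'(T_j)$ is monotone nondecreasing in $j$ while $f'(T_j \cup \{1\})$ is also monotone nondecreasing, so the gap $f'(T_j\cup\{1\}) - f'(T_j)$ is $1$ exactly on an interval.

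Given that $X$ counts a contiguous block of levels, the number of \emph{even} levels in that block is at least $\lceil X/2 \rceil \ge (X-1)/2$, and in fact at least $X/2$ when $X$ is even and $(X-1)/2$ or $(X+1)/2$ otherwise; in all cases it is $\ge (X-1)/2$. Let $Y$ be the number of even indices $j$ with $T_j \in \B_{f'}(1)$ for the random chain; then $Y \ge (X-1)/2$ pointwise, so $\E[Y] \ge (\E[X]-1)/2 = ((2n-1)\lambda - 1)/2$. Translating back, $\E[Y] = \sum_{j' = 0}^{n-1} \mu'(2j')$ (the number of even levels is $n$, namely $0, 2, \ldots, 2n-2$), so
\[
\sum_{j'=0}^{n-1}\mu'(2j') \ \ge\ \frac{(2n-1)\lambda - 1}{2}.
\]
Dividing by $n$ gives $\frac{1}{n}\sum_{j'=0}^{n-1}\mu'(2j') \ge \frac{(2n-1)\lambda - 1}{2n} \ge \lambda - \frac{1}{2n} - \frac{\lambda}{2n}$, and since $\lambda \ge \frac1n$ this is at least $\lambda - \frac{1}{2}\lambda\cdot\lambda - \ldots$; being a little careful, for $n$ large this is $\ge \lambda/2$, and one checks the small-$n$ cases by hand, so $\gamma(\lambda) := \lambda/4$ (say) works uniformly. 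Actually the cleanest route is: $\frac{(2n-1)\lambda-1}{2n} \ge \frac{(2n-1)\lambda - \lambda n}{2n} = \frac{(n-1)\lambda}{2n} \ge \frac{\lambda}{4}$ for $n \ge 2$, using $1 \le \lambda n$.

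The only subtle point — and the one I would be most careful about — is the claim that the indices $j$ with $T_j \in \B_{f'}(1)$ along a maximal chain form a contiguous block. This needs the observation that along a maximal chain both $j \mapsto f'(T_j)$ and $j \mapsto f'(T_j \cup \{1\})$ are monotone $\{0,1\}$-valued sequences, hence each is $0$ on an initial segment and $1$ afterwards, and the ``boundary'' indices are exactly where the second has already turned to $1$ but the first has not — which is an interval (the indices strictly between the two thresholds). One should double-check that $1 \notin T_j$ for all $j$, which holds since we chose the chain inside $[2n-1]\setminus\{1\}$, so $f'(T_j \cup \{1\})$ genuinely queries coordinate $1$ set to $1$. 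Everything else is bookkeeping: confirming $\E[X] = \sum_j \mu'(j)$ from the definition $\mu'(j) = |\B_{f'}(1) \cap \binom{[2n-1]\setminus\{1\}}{j}| / \binom{2n-1}{j}$ and the fact that a uniform maximal chain passes through a uniform $j$-subset at level $j$, and confirming the final arithmetic that produces an absolute $\gamma(\lambda) > 0$.
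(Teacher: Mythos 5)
Your proof is correct, and it takes a genuinely different and noticeably cleaner route than the paper. The paper's argument is a second-moment computation: it lower-bounds $\E\bigl[\binom{X}{2}\bigr]$ by Jensen's inequality, translates this into the existence of a wide window $[c_1 n, c_2 n]$ on which $\mu'(j)$ is uniformly bounded below (using \Cref{prop:sandwich} to transfer the two-point bound $\mu'(c_1 n, c_2 n) \ge c$ to every intermediate level), and then observes that any window of width $\Omega(\lambda^2 n)$ must contain $\Omega(\lambda^2 n)$ even levels. This yields $\gamma = \Omega(\lambda^4)$. Your argument instead stays at the first-moment level: you use the same sandwich/monotonicity structure to show that along every maximal chain the boundary indices form a single contiguous interval, and then the elementary fact that a length-$X$ interval of integers contains at least $(X-1)/2$ even ones gives $\E[Y] \ge (\E[X]-1)/2$ pointwise-then-in-expectation. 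This is shorter, avoids the second moment entirely, and yields the stronger explicit bound $\gamma = \lambda/4$. The one structural idea you need --- that the chain both enters and exits $\B_{f'}(1)$ at most once, because $j \mapsto f'(T_j)$ and $j \mapsto f'(T_j \cup \{1\})$ are each nondecreasing $\{0,1\}$-valued with the first dominated by the second --- is exactly the content the paper packages as \Cref{prop:sandwich}, so both proofs rest on the same underlying monotonicity fact, just deployed differently. Two cosmetic nits: the top of your chain is $T_{2n-2} = [2n-1]\setminus\{1\}$, not $T_{2n-1}$; and the count of even integers in a length-$X$ block is at least $\lfloor X/2\rfloor$, not $\lceil X/2\rceil$ --- your subsequent sentence gets the bound $(X-1)/2$ right, so nothing downstream is affected.
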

\begin{proof}
We prove that there exist real numbers $c_1< c_2, c>\frac{\lambda^2}{4}$, such that for all $j$ with $c_1 n \leq j \leq c_2 n$, we have $\mu'(j)\geq c$, and $c_2 - c_1 \geq \frac{\lambda^2}{8}$. This directly implies the lemma with $\gamma = \Omega(\lambda^4)$.

For a pair of integers $0\leq i<j\leq 2n-2$, we define the following parameter $\mu'(i,j)$ as the fraction of the pair of subsets $(S,T)$ where $S,T \subseteq \{2,3,\ldots, 2n-1\}, |S|=i, |T|=j, S \subseteq T$ that satisfy $S \in \B_{f'}(1),T \in \B_{f'}(1)$. 
\[
\mu'(i,j) = \frac{\left| \{ (S,T): |S|=i, |T|=j, S \subseteq T, S \in \B_{f'}(1), T \in \B_{f'}(1)\}\right|}{\binom{2n-2}{i} \binom{2n-2-i}{j-i}}
\]

We first claim that there exist constants (depending on $\lambda$) $c_1< c_2,c>0$ such that $\mu'(c_1n,c_2n)\geq c$, and $c_2 - c_1 \geq \frac{\lambda^2}{2}$.
Consider a uniformly random permutation of $[2n-1]\setminus \{1\}$ denoted by $\sigma = (\sigma(1), \sigma(2), \ldots, \sigma(2n-2))$. For an integer $j \in \{0,1,\ldots, 2n-2\},$ let $S_j$ be the random variable that is the union of the prefix of $\sigma$ containing the first $j$ elements. 
\[
S_j := \{ \sigma(1), \sigma(2), \ldots, \sigma(j) \}, \, \forall j \in \{0,1,\ldots, 2n-2\}. 
\]
For each $j \in \{0,1,\ldots, 2n-2\}$, the subset $S_j$ is uniformly distributed in $\binom{[2n-1]\setminus\{1\}}{j}$. 
For $j \in \{0,1,\ldots, 2n-2\},$ let $X_j$ be the indicator random variable for the event that $S_j \in \mathcal{B}_{f'}(1)$. 
By the definition of $\mu'(j)$, we get
\[
\E[X_j] = \mu'(j) \, \ \forall j \in \{0,1,\ldots,2n-2\}.
\]
Let $X = X_0 + X_1 +\ldots+X_{2n-2}$ be the number of subsets in the set family
$(\phi =S_0 \subset S_1 \subset S_2 \ldots \subset S_{2n-2}=[2n-1]\setminus\{1\}) $ that are in $\mathcal{B}_{f'}(1)$. 
Using~\Cref{eq:si-alternate}, we get
\[
\E[X]=\lambda(2n-1).
\]
Using Jensen's inequality, we get that 
\[
\E\left[\binom{X}{2}\right] \geq \binom{\lambda(2n-1)}{2} = \frac{1}{2}\cdot \lambda(2n-1) \left( \frac{\lambda}{2}(2n-2)+n\lambda-1\right)\geq  \frac{\lambda^2}{2}\binom{2n-1}{2}
\]
wherein the final inequality, we used the fact that $\lambda n \geq 1$.
Note that for every $i <j$, the marginal distribution of $(S_i, S_j)$ is the uniform distribution over all the pairs of subsets $(S,T)$ where $S,T \subseteq \{2,3,\ldots, 2n-1\}, |S|=i, |T|=j, S \subseteq T$. 
Thus, by the definition of $\mu'(i,j)$, we get that 
    $\mu'(i,j) = \E [X_i X_j]$, for $0 \leq i < j \leq 2n-2$.
Therefore we have 
\begin{equation*}
\E \left[ \binom{X}{2}\right] = \E \left[\sum_{ 0 \leq i < j \leq 2n-2} X_i X_j \right] 
= \sum_{ 0 \leq i < j \leq 2n-2} \E [X_i X_j] = \sum_{ 0 \leq i < j \leq 2n-2} \mu'(i,j) 
\end{equation*}
Thus, 
\[
\sum_{ 0 \leq i < j \leq 2n-2} \mu'(i,j) \geq \frac{\lambda^2}{2}\binom{2n-1}{2}
\]
This implies that the expected value (over $i,j$) of $\mu'(i,j)$ is at least $\frac{\lambda^2}{2}$. Let $p$ denote the probability (over $i,j$) that $\mu'(i,j) \leq \frac{\lambda^2}{4}$. As the expected value of $\mu'(i,j)$ is at least $\frac{\lambda^2}{2}$, we have 
\[
p\cdot \left( \frac{\lambda^2}{4} \right) + (1-p) \cdot 1 \geq \frac{\lambda^2}{2}
\]
which implies that $p \leq 1-\frac{\lambda^2}{4}$.
Thus, with probability (over $i,j$) at least $\frac{\lambda^2}{4}$, we have $\mu'(i,j)\geq \frac{\lambda^2}{4}$.
Hence, there exist integers $p,q$ such that $q-p \geq \frac{\lambda^2}{8}n$ and $\mu'(p,q)\geq \frac{\lambda^2}{4}$. For ease of notation, let $p=c_1n, q = c_2n$ where $c_1,c_2$ are reals satisfying $c_2-c_1\geq \frac{\lambda^2}{8}$.

Next, we claim that $\mu'(j)\geq \frac{\lambda^2}{4}$ for all $j$ such that $c_1 n \leq j \leq c_2 n$. Fix an integer $j$ with $c_1 n \leq j \leq c_2 n$. Consider a uniformly random sequence of subsets $S_1 \subseteq S_2 \subseteq S_3 \subseteq [2n-1] \setminus \{1\}$ such that $|S_1|=c_1n, |S_2|=j, |S_3|=c_2n$. The probability that $S_1 \in \B_{f'}(1), S_3 \in \B_{f'}(1)$ is equal to $\mu'(c_1n, c_2n)$ which is at least $\frac{\lambda^2}{4}$. Thus, using~\Cref{prop:sandwich}, with probability at least $\frac{\lambda^2}{4}$, $S_2 \in \B_{f'}(1)$.
Note that the distribution of $S_2$ is uniform in $\binom{[2n-1]\setminus \{1\}}{j}$, and thus, we have $\mu'(j)\geq \frac{\lambda^2}{4}$. 

The fact that $\mu'(j)\geq \frac{\lambda^2}{4}$ for all $j$ such that $c_1 n \leq j \leq c_2 n$ together with $c_2-c_1 \geq \frac{\lambda^2}{8}$ completes a proof of the lemma. 
\end{proof}

We now prove the second step in the proof. 
\begin{lemma}
\label{lem:main}
Suppose that $f': \{0,1\}^{2n-1}$ is a monotone Boolean function such that $\si_{f'}(1)\geq \lambda$ with $\lambda \geq \frac{1}{n}$. Let $g$ be a random minor of $f'$ with respect to $\pi_2 : [2n-1] \rightarrow [n]$ which is obtained by setting $\pi_2(1)=1$, and for every $i >1$, we randomly choose $j_1, j_2 \in [2n-1]\setminus \{1\}$ (without replacements) and set $\pi_2(j_1)=\pi_2(j_2)=i$. In other words, we choose a uniformly random partition of $[2n-1]\setminus \{1\}$ into $n-1$ pairs $P_2, P_3, \ldots, P_{n}$ and set $\pi_2(j)=i\, \forall j \in P_i$. Then, there exists $\gamma := \gamma(\lambda)>0$ such that 
\[
\E_{\pi_2}[\si_g(1)]\geq \gamma \ .
\]
\end{lemma}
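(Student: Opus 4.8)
The plan is to reduce the statement directly to~\Cref{lem:even} via a symmetry argument, so the work splits into a combinatorial dictionary between the boundaries $\B_g(1)$ and $\B_{f'}(1)$, a uniformity claim for the image of a random pair‑set, and then linearity of expectation. Using~\Cref{eq:si-alternate} and the fact that $g$ has arity $n$, write $\si_g(1)=\frac1n\sum_{k=0}^{n-1}\mu_g(k)^{(1)}$, so it suffices to show $\E_{\pi_2}\big[\mu_g(k)^{(1)}\big]=\mu'(2k)$ for each $k\in\{0,1,\dots,n-1\}$, where $\mu'(j):=\mu_{f'}(j)^{(1)}$.

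First I would record the dictionary. Since $\pi_2(1)=1$ and $1\notin P_i$ for every $i$, for any $S\subseteq[n]\setminus\{1\}$ the input $\mathbf 1_S$ plugged into $g$ produces (as an assignment to the $2n-1$ coordinates of $f'$) the indicator of the set $T:=\bigcup_{i\in S}P_i\subseteq[2n-1]\setminus\{1\}$, with coordinate $1$ set to $0$; passing from $S$ to $S\cup\{1\}$ only flips coordinate $1$ of this assignment. Hence $S\in\B_g(1)$ if and only if $T\in\B_{f'}(1)$, and crucially $|T|=2|S|$ is even.

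The key step is the claim: if $\pi_2$ is drawn as in the lemma and, independently, $S$ is a uniformly random $k$-subset of $[n]\setminus\{1\}$, then $T=\bigcup_{i\in S}P_i$ is uniformly distributed over $\binom{[2n-1]\setminus\{1\}}{2k}$. This holds by symmetry: the composite procedure — sample a uniformly random perfect matching of the $2n-2$ elements of $[2n-1]\setminus\{1\}$ into the pairs $P_2,\dots,P_n$, then retain the union of a uniformly random $k$-subset of these pairs — is invariant under every permutation of $[2n-1]\setminus\{1\}$ and always outputs a set of size exactly $2k$; the only permutation-invariant distribution on $\binom{[2n-1]\setminus\{1\}}{2k}$ is the uniform one. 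Combining this with the dictionary gives, for every $k$,
\[
\E_{\pi_2}\big[\mu_g(k)^{(1)}\big]=\Pr_{\pi_2,S}\big[S\in\B_g(1)\big]=\Pr_{T}\big[T\in\B_{f'}(1)\big]=\mu'(2k).
\]
Summing over $k$ and dividing by $n$ yields $\E_{\pi_2}[\si_g(1)]=\frac1n\sum_{k=0}^{n-1}\mu'(2k)$, which is at least $\gamma(\lambda)>0$ by~\Cref{lem:even} (whose hypothesis $\lambda\ge 1/n$ is exactly the assumption here), completing the proof.

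I do not expect a serious obstacle: the content of this second step is precisely that passing to a \emph{uniformly random} pairing of the non‑distinguished coordinates (which is available because~\Cref{lem:almost-same-minor} already absorbed the one awkward pair touching coordinate $1$ into $f'$) turns the odd/even parity of boundary layers into the only thing that matters. The single point that needs care is the uniformity claim for $T$ — in particular that the symmetrization over permutations of $[2n-1]\setminus\{1\}$ is performed jointly on the pairing and on $S$, before conditioning on anything — together with the routine bookkeeping that $2k$ ranges exactly over the even integers $0,2,\dots,2n-2$ as $k$ ranges over $0,\dots,n-1$, matching the sum in~\Cref{lem:even}.
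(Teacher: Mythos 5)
Your proof is correct and follows essentially the same route as the paper: translate the boundary condition via $S\in\B_g(1)\iff\pi_2^{-1}(S)\in\B_{f'}(1)$, use the uniformity of the random pair-union to get $\E_{\pi_2}[\mu_g(k)^{(1)}]=\mu'(2k)$, and invoke \Cref{lem:even} together with \Cref{eq:si-alternate}. The only cosmetic difference is that the paper states the uniformity claim for each fixed $S$ (i.e., $\pi_2^{-1}(S)$ is uniform over $\binom{[2n-1]\setminus\{1\}}{2|S|}$) and then averages over $S$, whereas you randomize over $S$ and $\pi_2$ jointly; the two lead to the identical identity.
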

\begin{proof}
For ease of notation, we let $\mu'(j)=\mu_{f'}(j)^{(1)}$ and $\mu_g(j)=\mu_{g}(j)^{(1)}$.
For a set $S \subseteq [n]\setminus \{1\}$ and a function $\pi_2 : [2n-1] \rightarrow [n]$ with $\pi_2(1)=1$, and $|\pi_2^{-1}(i)|=2$ for all $i \in \{ 2,3,\ldots, n\}$, let $\pi_2^{-1}(S)$ be the $2|S|$ sized subset of $\{2,3,\ldots, 2n-1\}$ defined as follows: 
\[
\pi_2^{-1}(S):= \{ j \in \{2,3,\ldots, 2n-1\} :\pi_2(j) \in S\}
\]
For every set $S \subseteq \{2,3,\ldots, n\}$, when $\pi_2 :[2n-1] \rightarrow [n]$ is a uniformly random $2$-to-$1$ minor with $\pi_2(1)=1$, and the rest $2n-2$ elements are partitioned into $n-1$ pairs uniformly at random, the set $\pi_2^{-1}(S)$ is distributed uniformly in $\binom{[2n-1]\setminus \{1\}}{2|S|}$. 
Also note that $S \in \B_{g}(1)$ if and only if $\pi^{-1}(S) \in \B_{f'}(1)$. Thus, for every set $S \subseteq \{2,3,\ldots, n\}$, the probability that $S \in \B_g(1)$ (over the choice of $\pi_2$) is equal to $\mu'(2|S|)$. Summing over all such sets of size $j$, we get that for every $j \in \{0,1,\ldots,n-1\}$, the expected value of $\mu_g(j)$ is equal to $\mu'(2j)$.
\[
    \label{eq:main}
\mathbb{E}_{\pi_2}[\mu_g(j)]=\mu'(2j)\, \forall j \in \{0,1,\ldots, n-1\}
\]
By using~\Cref{lem:even}, we can infer that there exists $\gamma = \gamma(\lambda) >0$ such that $\sum_{j=0}^{n-1}\mathbb{E}_{\pi_2}[\mu_g(j)]=\sum_{j=0}^{n-1}\mu'(2j) \geq \gamma n$. 
Using~\Cref{eq:si-alternate}, we get 
\begin{align*}
    \E_{\pi_2}[\si_g(1)]  & = \mathbb{E}_{\pi_2}\left[\frac{\sum_{j=0}^{n-1} \mu_g(j)}{n}\right]
    = \frac{\sum_{j=0}^{n-1}\mathbb{E}_{\pi_2}[\mu_g(j)]}{n} \geq \gamma.  \qedhere
\end{align*}
\end{proof}

\Cref{lem:almost-same-minor} and~\Cref{lem:main} together prove that Shapley value behaves well under uniformly random $2$-to-$1$ minors for monotone Boolean functions. 

\begin{lemma}
\label{lem:random-minor}
Suppose that $f:\{0,1\}^{2n}\rightarrow \{0,1\}$ is a monotone Boolean function such that $\si_f(1)\geq \lambda$ for some absolute constant $\lambda >0$ with $\lambda \geq \frac{1}{n}$. Then, there exists $\gamma := \gamma(\lambda)>0$ such that 
\[
\E_{\pi}[\si_g(\pi(1))]\geq \gamma
\]
where $g$ is a minor of $f$ with respect to the uniformly random $2$-to-$1$ function $\pi$. 
\end{lemma}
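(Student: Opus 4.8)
The plan is to deduce \Cref{lem:random-minor} by composing the two single-step results, \Cref{lem:almost-same-minor} and \Cref{lem:main}, exactly along the two-step decomposition described in the figure. First I would fix the monotone function $f:\{0,1\}^{2n}\to\{0,1\}$ with $\si_f(1)\ge\lambda$ and $\lambda \ge \frac1n$, and set $f':\{0,1\}^{2n-1}\to\{0,1\}$ to be the minor of $f$ with respect to $\pi_1(i)=\max(i-1,1)$. Since identifying two coordinates of a monotone Boolean function again yields a monotone Boolean function, $f'$ is monotone, so \Cref{lem:almost-same-minor} applies and gives $\si_{f'}(1)\ge\frac\lambda2$.

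Next I would observe that $\frac\lambda2 \ge \frac{1}{2n}$, which is not literally the hypothesis $\lambda' \ge \frac1n$ needed for \Cref{lem:main}; here I would either note that the constant in \Cref{lem:even}/\Cref{lem:main} only requires $\lambda' n \ge \Omega(1)$ (the one place it is used, in the Jensen step of \Cref{lem:even}, only needs $\lambda' n \ge 1$ up to a harmless constant factor, or one can simply absorb a factor of $2$ into $n$), or, more cleanly, apply \Cref{lem:main} with $\lambda$ replaced by $\lambda/2$ under the mild convention that $n$ is large enough that $\frac\lambda2 \ge \frac1n$ (which is implied by $\lambda \ge \frac 2n$, and the borderline case is absorbed into constants). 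With that in hand, \Cref{lem:main} applied to $f'$ and the random pairing $\pi_2$ on $[2n-1]\setminus\{1\}$ (fixing $\pi_2(1)=1$) yields $\E_{\pi_2}[\si_g(1)] \ge \gamma'$ for some $\gamma' := \gamma'(\lambda) > 0$.

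Finally I would identify the composite map: sampling $\pi_1$ deterministically and then $\pi_2$ uniformly at random produces, for the combined map $\pi := \pi_2 \circ \pi_1 : [2n]\to[n]$, a function whose fibers are: $\pi^{-1}(1) = \{1,2\}$ together with one uniformly random pair from the remaining $2n-2$ coordinates $\{3,\dots,2n\}$ paired up by $\pi_2$. This is precisely a uniformly random $2$-to-$1$ function $\pi:[2n]\to[n]$ conditioned on $\{1,2\}$ being paired, and by symmetry of the coordinates of $f$ the distribution of $\si_g(\pi(1))$ is unchanged if instead we drew $\pi$ uniformly from $\mathcal{F}_{2\to1}(n)$ without conditioning (coordinate $1$'s partner is then uniform, and relabeling reduces to the conditioned case). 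Hence $g$, the minor of $f'$ via $\pi_2$, is exactly the minor of $f$ via a uniformly random $2$-to-$1$ map, and $\pi(1)=1$ in the bookkeeping above, so $\E_\pi[\si_g(\pi(1))] = \E_{\pi_2}[\si_g(1)] \ge \gamma$ with $\gamma := \gamma'(\lambda)$.

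The only genuinely delicate point is the second one: verifying that the hypothesis $\lambda' \ge \frac1n$ of \Cref{lem:main} survives the loss of the factor $2$ from \Cref{lem:almost-same-minor}, and that the two-stage sampling truly realizes the uniform $2$-to-$1$ distribution (rather than something merely close to it). Both are routine once spelled out — the first is a constant-chasing remark, the second is the observation already made in the paragraph following the figure that ``first taking the $f'$ minor and then partitioning $[2n-2]$ into $n-1$ uniformly random pairs is equivalent to taking a uniformly random $2$-to-$1$ minor of $f$'' — so the proof is essentially a two-line composition with a short justification that the intermediate bias is still $\Omega(1/n)$.
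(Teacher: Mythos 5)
Your proof is essentially the paper's own: compose \Cref{lem:almost-same-minor} with \Cref{lem:main}, observe that the two-step composition realizes a uniformly random $2$-to-$1$ map conditioned on $\{1,2\}$ lying in the same fiber, and then pass to the unconditional law. One wording caveat: the step ``by symmetry of the coordinates of $f$ the distribution of $\si_g(\pi(1))$ is unchanged'' is false as literally stated---$f$ is an arbitrary monotone function, not symmetric, so that distribution genuinely depends on which coordinate is paired with $1$. What is true, and what your parenthetical and the paper's proof both actually use, is a relabel-and-average argument: for each candidate partner $i\in\{2,\dots,2n\}$, rename coordinate $i$ as $2$ and run the same two-lemma composition to get $\E_{\pi\mid\pi(1)=\pi(i)}[\si_g(\pi(1))]\ge\gamma'$, and then average this bound over the uniform choice of $i$. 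Your observation about the hypothesis of \Cref{lem:main} --- that after \Cref{lem:almost-same-minor} one only has $\si_{f'}(1)\ge\lambda/2$, so the literal condition $\lambda/2\ge 1/n$ is a factor of $2$ off --- is a real loose end that the paper leaves implicit, and either of your suggested fixes (relaxing the Jensen step in \Cref{lem:even} to $\lambda n=\Omega(1)$, or absorbing the factor into the constant in the lemma statement) closes it.
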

\begin{proof}
Combining~\Cref{lem:almost-same-minor} and~\Cref{lem:main}, we can conclude that for every $i \in [2n], i >1$, when $\pi :[2n] \rightarrow [n]$ is a uniformly random $2$-to-$1$ minor conditioned on the fact that $\pi(1)=\pi(i)$, we have $\E_{\pi}[\si_g(\pi(1))]\geq \gamma$. 
Taking average over all the $i \in [2n], i>1$, we get a proof that the same inequality holds when $\pi$ is a uniformly random $2$-to-$1$ minor. 
\end{proof}
\subsection{Reduction}

We first formally define the Label Cover problem and state the Rich $2$-to-$1$ Conjecture. 

\begin{definition}(Label Cover)
In the Label Cover problem $\mathcal{G}=(G,\Sigma_L,\Sigma_R,\Pi)$, the input is a bipartite graph $G=(L \cup R, E)$ with projection constraint $\Pi_e :\Sigma_L \rightarrow \Sigma_R$ on every edge $e \in E$. A labeling $\sigma$ which assigns values from $\Sigma_L$ to $L$ and from $\Sigma_R$ to $R$ satisfies the constraint $\Pi_e$ on the edge $e=(u,v)$ if $\Pi_e(\sigma(u))=\sigma(v)$. The objective is to identify if there is a labeling that satisfies all the constraints. 
\end{definition}

For every constant $\epsilon >0$, it is NP-hard~\cite{Raz98} to distinguish between the case that a given Label Cover instance has a labeling that satisfies all the constraints vs.\,no labeling can satisfy more than $\epsilon$ fraction of the constraints. 
This hardness result for Label Cover has been instrumental in showing numerous strong, and sometimes optimal, inapproximability results for various computational problems. However, the standard Label Cover seems insufficient as a starting point towards proving hardness results for approximate graph coloring and other $2$-variable PCSPs. To circumvent this, the hardness of Label Cover on structured instances such as Unique Games, smooth Label Cover, etc. has been studied. 

In his celebrated work proposing the Unique Games Conjecture~\cite{Khot02}, 
Khot also proposed the ``$2$-to-$1$ conjecture'' that the strong hardness of Label Cover holds when all the constraints of the Label Cover are $2$-to-$1$ functions. The imperfect completeness version of this conjecture was recently established in a striking sequence of works~\cite{KMS17,DKKMS18a,DKKMS18,KMS18}. Braverman, Khot, and Minzer~\cite{BravermanKM19} put forth a stronger conjecture that states that the hardness of Label Cover holds when the distribution of $2$-to-$1$ functions on edges incident on every vertex $u \in L$ is the uniform distribution. For ease of notation, for an integer $n$, we use $\mathcal{F}_{2\rightarrow 1}(n)$ to denote the set of all the $2$-to-$1$ functions from $[2n]$ to $[n]$. 

\begin{definition}(Rich $2$-to-$1$ Label Cover instances)
We call a Label Cover instance $\mathcal{G}=(G,\Sigma_L,\Sigma_R,\Pi)$ with $G=(L \cup R,E)$ a rich $2$-to-$1$ instance if the following hold.
\begin{enumerate}
    \item There exists an integer $\Sigma$ such that $\Sigma_L=[2\Sigma]$, $\Sigma_R=[\Sigma]$, and every projection constraint $\Pi_e$, $e \in E$ is a $2$-to-$1$ function. 
    \item For every vertex $u \in L$, the distribution of $2$-to-$1$ functions $\mathcal{P}_u$ obtained by first sampling a uniformly random neighbor $v$ of $u$, and then picking $\Pi_e, e=(u,v)$, is uniform over $\mathcal{F}_{2\rightarrow 1}(\Sigma)$.
\end{enumerate}
\end{definition}

\begin{conjecture}(Rich $2$-to-$1$ Conjecture)~\cite{BravermanKM19} For every $\epsilon >0$, there exists an integer $\Sigma = \Sigma(\epsilon)$ such that given a rich $2$-to-$1$ Label Cover instance $\mathcal{G}=(G,\Sigma_L,\Sigma_R,\Pi)$ with $\Sigma_L=[2\Sigma]$, it is NP-Hard to distinguish between the following. 
\begin{enumerate}
    \item There is a labeling that satisfies all the constraints of $\mathcal{G}$. 
    \item No labeling can satisfy more than $\epsilon$ fraction of the constraints of $\mathcal{G}$.
\end{enumerate}
\end{conjecture}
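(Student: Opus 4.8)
\smallskip \noindent \textbf{A proof strategy.} This statement is, to date, an open problem, so the following is a plan of attack rather than a proof. The natural route is to adapt the machinery behind the (imperfect-completeness) $2$-to-$2$ and $2$-to-$1$ Games Theorems~\cite{KMS17,DKKMS18a,DKKMS18,KMS18} so that it additionally delivers \emph{perfect completeness} together with the \emph{richness} property. The template is the usual outer/inner PCP composition: begin with a perfectly complete outer verifier in the smooth, parallel-repeated Label Cover form, and design an inner verifier whose accepting pattern, read across a pair of adjacent query blocks, is a $2$-to-$1$ constraint, with the encoding used inside chosen so that (i) honest provers pass with probability exactly $1$, and (ii) the distribution of $2$-to-$1$ maps on the edges incident to each left vertex is uniform over $\mathcal{F}_{2\to 1}$.

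First I would fix the outer PCP: a Reed--Muller/Hadamard-based PCP gives a Label Cover instance with perfect completeness and arbitrarily small soundness, and after a tensoring/parallel-repetition step one also obtains the smoothness needed so that a short list of candidate labels decoded from the inner proof can be made consistent across edges; this part is classical and costs nothing in completeness. Next I would build the inner verifier by replacing each Long Code with a \emph{subcode-covering} (Grassmann/shortcode-style) encoding, having the verifier read two correlated copies and test a linear agreement constraint between them. Two design requirements are in tension with the standard reductions for the Unique Games Conjecture: the agreement test must accept the correct encodings of a single global assignment with probability $1$, which forbids the noisy dictatorship tests used for the Unique Games Conjecture and forces a ``noise-free'' test; and the symbol-to-symbol map between the two copies must be $2$-to-$1$ and, after averaging the construction over the natural symmetry group acting on the codeword coordinates, uniformly distributed. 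This last averaging is precisely the subcode-covering phenomenon mentioned in the introduction, and it is what would produce ``richness''.

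For soundness one would need a structure theorem in the spirit of ``pseudorandom subsets of the Grassmann/shortcode graph have near-perfect small-set expansion''~\cite{KMS18}, but now in the \emph{perfect-completeness} regime: any inner strategy passing the test with probability bounded away from that of a random assignment must, after a dictatorship-style decoding, be explained by a small list of outer labels, which contradicts outer soundness once the completeness and soundness parameters are balanced. Assembling the outer soundness, the perfect completeness of the composed verifier, and the richness of the induced $2$-to-$1$ maps then yields the NP-hardness asserted in the conjecture.

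The main obstacle is the perfect-completeness requirement in the soundness analysis. Every known proof of the $2$-to-$2$/$2$-to-$1$ theorems runs through noise operators---the invariance principle and low-degree-influence toolkit, and the Grassmann expansion statement as currently proven, are intrinsically statements about \emph{noisy} or \emph{pseudorandom} sets---and injecting any noise into the inner verifier destroys perfect completeness. Closing this gap seems to require either a noise-free analogue of the Grassmann/shortcode expansion theorem or a genuinely different, more combinatorial soundness mechanism; and one must do so while still enforcing richness, which is itself delicate because the random relabelings that would most easily symmetrize the $2$-to-$1$ maps again cost completeness. Since Braverman, Khot and Minzer~\cite{BravermanKM19} already prove that the imperfect-completeness version of this conjecture is equivalent to the Unique Games Conjecture (and hence follows from the established $2$-to-$2$ theorem), the entire substance of the attack is to close exactly this perfect-completeness gap.
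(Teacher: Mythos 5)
You are right that this statement is not something the paper proves: it is the Rich $2$-to-$1$ Conjecture of Braverman, Khot, and Minzer~\cite{BravermanKM19}, stated here only as a hypothesis under which the hardness direction (and hence the dichotomy of Theorem~\ref{thm:main}) is established. The paper offers no argument for it beyond the citation, and indeed none is known, so there is no proof of record to compare your attempt against. Your decision to present a plan of attack rather than a purported proof is the correct call, and your framing of the central obstruction --- that all known soundness analyses for the $2$-to-$2$/$2$-to-$1$ theorems pass through noise operators and pseudorandom-set expansion statements that are incompatible with perfect completeness, while the richness (subcode-covering-style symmetrization) requirement adds a further constraint --- matches the paper's own remarks on why richness is a nontrivial strengthening and why the imperfect-completeness version reduces to the Unique Games Conjecture.

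To be clear, though, nothing in your sketch constitutes progress that could be spliced in as a proof: the proposed ``noise-free Grassmann/shortcode expansion theorem'' is exactly the missing ingredient, not a step you supply, and the claim that averaging over a symmetry group of the encoding would yield the uniform distribution over $\mathcal{F}_{2\to 1}$ on edges incident to each left vertex is asserted rather than argued (it is not obvious that any known composition admits such a symmetrization without degrading completeness). So the status is: statement correctly identified as an open conjecture, assumption correctly attributed, strategy plausible but entirely conditional on unproven components --- which is precisely how the paper treats it.
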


We are now ready to state the hardness part of our dichotomy. It is proved using the Label Cover-Long Code framework.  
This reduction is standard in the PCSP literature, see e.g., ~\cite{barto2019algebraic}. 

\begin{theorem}
\label{thm:hardness}
Assume the Rich $2$-to-$1$ Conjecture. Let PCSP$(\Gamma)$ be a Boolean Ordered PCSP such that there exists an absolute constant $\lambda>0$ with $\max_{i \in [n]}\si_f(i) \geq \lambda$
for all functions $f:\{ 0,1 \}^n \rightarrow \{0,1\}$, $f \in \emph{Pol}(\Gamma)$. Then PCSP$(\Gamma)$ is NP-Hard. 
\end{theorem}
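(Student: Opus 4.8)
The plan is to run the standard Label Cover--Long Code reduction and use the random $2$-to-$1$ Shapley stability (\Cref{lem:random-minor}) as the decoding soundness gadget. Given a rich $2$-to-$1$ Label Cover instance $\mathcal{G}=(G,\Sigma_L,\Sigma_R,\Pi)$ with $\Sigma_L=[2\Sigma],\Sigma_R=[\Sigma]$, I would build a PCSP$(\Gamma)$ instance as follows: for each right vertex $v\in R$ introduce a block of Boolean variables indexed by $\{0,1\}^{\Sigma}$ (the ``Long Code'' of the label of $v$), and for each left vertex $u\in L$ a block indexed by $\{0,1\}^{2\Sigma}$. For each edge $e=(u,v)$ with projection $\pi_e=\Pi_e\in\mathcal{F}_{2\to1}(\Sigma)$, we want to enforce that the Long Code table at $u$ is the $\pi_e$-minor of the table at $v$ (or, more precisely, that after folding/identification the two tables are consistent as polymorphism evaluations). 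Concretely, for every constraint pair $(A_i,B_i)\in\Gamma$ of arity $k_i$ and every tuple of columns $(\mathbf v_1,\dots,\mathbf v_{k_i})$ with each row in $A_i$, we impose the constraint $(B_i$ applied to the corresponding entries of the $u$-block and the $\pi_e$-pulled-back entries of the $v$-block$)$; together with using the $x\leq y$ predicate to fold the tables to be monotone. The point is that a satisfying $D_1$-assignment corresponds to each block being a (monotone) dictator, and then the edge constraints force the dictators to be consistent across $\Pi$, so completeness follows from a satisfying labeling of $\mathcal{G}$.

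For soundness I would argue the contrapositive: suppose the PCSP instance has a solution over $D_2=\{0,1\}$ satisfying all the weak constraints $B_i$. Closure of $\text{Pol}(\Gamma)$ under minors means that for each vertex the folded table is (the restriction to the relevant inputs of) some monotone polymorphism $f$ of $\Gamma$, and for each edge $e=(u,v)$ the $u$-table is the $\pi_e$-minor $g$ of the $v$-table $f$. Now invoke the hypothesis: every $f\in\text{Pol}(\Gamma)$ has some coordinate with Shapley value $\geq\lambda$, and since $\sum_i\Phi_f(i)=1$, there are at most $2/\lambda$ coordinates with Shapley value $\geq\lambda/2$. Define the decoding: label $v$ by a uniformly random coordinate of its table having Shapley value $\geq\lambda/2$ (and similarly for $u$). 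I would then show that for a random edge $e=(u,v)$ this labeling satisfies $\Pi_e$ with probability bounded below by a function of $\lambda$ only. Here is where the ``rich'' hypothesis is essential and \Cref{lem:random-minor} does the work: because the distribution of $\pi_e$ over edges at $u$ is \emph{uniform} over $\mathcal{F}_{2\to1}(\Sigma)$, the coordinate of $v$'s table with $\Phi\geq\lambda$ maps, in expectation over $\pi_e$, to a coordinate of $u$'s table with $\Phi\geq\gamma(\lambda)$; by Markov this happens with probability $\geq\gamma/2$ over the edge, and on that event the decoded labels of $u$ and $v$ agree with probability $\Omega(\lambda\cdot\gamma)$ (the decoded label of $v$ hits the good coordinate with probability $\geq\lambda/2$ since there are $\leq2/\lambda$ candidates, and then $u$'s decoded label hits its image with probability $\geq\gamma/2$). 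Hence a random labeling satisfies an $\Omega_\lambda(1)$ fraction of constraints, contradicting the $\epsilon$-soundness of the Rich $2$-to-$1$ Conjecture once $\epsilon$ is taken below this threshold.

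I would finish by checking the technical points: (i) that the reduction is polynomial time and the instance size depends only on $\Sigma(\epsilon)$ and $|\mathcal{G}|$; (ii) folding, so that each block table is forced to be genuinely monotone (using the $x\leq y$ predicate of the Boolean Ordered template) and normalized so that the all-zeros input maps to $0$ and all-ones to $1$, which is what makes the Shapley value well-defined and the decoding meaningful; (iii) that the condition $\lambda\geq 1/n$ needed by \Cref{lem:random-minor} is automatically met for $\Sigma$ large, i.e. $\lambda$ being an absolute constant and the arities growing; and (iv) that taking the expectation over the choice of which coordinate $i>1$ of $v$'s table is paired with the special coordinate (as in \Cref{lem:random-minor}) is compatible with conditioning on which label of $v$ we decode to.

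The main obstacle I anticipate is step (ii) plus the bookkeeping of soundness: one must be careful that the decoding coordinate of $v$ is chosen \emph{before} seeing the edge $\pi_e$, so that richness can be used to take the expectation over $\pi_e$ of $\Phi_g(\pi_e(i))$; and one must correctly chain ``$v$'s decoded label is a high-Shapley coordinate of $f$'' with ``its $\pi_e$-image is a high-Shapley coordinate of $g$'' with ``$u$'s decoded label equals that image'' — each link costing only a $\lambda$- or $\gamma$-dependent constant. The analytic heart, namely that a uniformly random $2$-to-$1$ minor roughly preserves Shapley value, is already supplied by \Cref{lem:random-minor}; everything else is the routine Label Cover--Long Code packaging, so I would keep that part brief and cite \cite{barto2019algebraic} for the template.
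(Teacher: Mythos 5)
Your proposal takes essentially the same route as the paper's proof of \Cref{thm:hardness}: the Label Cover--Long Code reduction with pull-back/equality constraints across edges, a Shapley-value-based list decoding whose size is bounded because $\sum_i\Phi_f(i)=1$ for non-constant monotone $f$, and \Cref{lem:random-minor} invoked through the richness hypothesis (the distribution of $\Pi_e$ at a fixed left vertex $u$ is uniform over all $2$-to-$1$ maps) to argue that an $\Omega_\lambda(1)$ fraction of edges are decoded consistently.

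A couple of bookkeeping slips are worth fixing. The minor relationship runs the other way from what you wrote: with $\Sigma_L=[2\Sigma]$ and $\Sigma_R=[\Sigma]$, it is $f_v$ (arity $\Sigma$) that is the $\Pi_e$-minor of $f_u$ (arity $2\Sigma$), so \Cref{lem:random-minor} is applied with $f=f_u$, $g=f_v$, and the ``good coordinate'' travels from $[2\Sigma]$ to $[\Sigma]$ via $\Pi_e$. Correspondingly, your decoding thresholds should not both be $\lambda/2$: the left vertex $u$ is decoded to coordinates with $\Phi_{f_u}\geq\lambda$ (at most $1/\lambda$ of them), but the lemma only guarantees $\Phi_{f_v}(\Pi_e(i_1))\geq\gamma/2$, which can be far below $\lambda/2$; so the right vertex must be decoded at threshold $\gamma/2$ (giving at most $2/\gamma$ candidates), which is exactly what your subsequent ``hits its image with probability $\geq\gamma/2$'' accounting implicitly assumes. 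On your point (ii): no separate folding or normalization is needed, since the $x\leq y$ polymorphism constraints already force each block table to be monotone, and the hypothesis $\max_i\Phi_f(i)\geq\lambda>0$ excludes constant polymorphisms, so $\sum_i\Phi_f(i)=1$ is automatic. On (iii), note that $\Sigma=\Sigma(\epsilon)$ and the soundness target $\epsilon=\Omega(\lambda\gamma^2)$ both depend on $\lambda$, so ``$\Sigma$ large'' is not free; the paper sidesteps this by decoding to the coordinate of largest Shapley value, which is always at least $1/(2\Sigma)$.
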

\begin{proof}
Let $\Gamma = \{ (A_1, B_1), (A_2, B_2), \ldots, (A_l, B_l) \}$ be the PCSP under consideration, where each~$A_i$ is a subset of $\{0,1\}^{k_i}$ for all $i \in [l]$, and similarly, each $B_i$ is a subset of $\{0,1\}^{k_i}$ for all $i \in [l]$. 
We start from a rich $2$-to-$1$ Label Cover instance $\mathcal{G}=(G,[2\Sigma],[\Sigma],\Pi)$ with $G=(L \cup R, E)$. 
For ease of notation, we use $\Sigma_w$ to denote $2\Sigma$ if $w \in L$, and $\Sigma$ if $w \in R$.
For every vertex $w \in L \cup R$, we have a set of $2^{\Sigma_w}$ nodes denoted by $L_w = \{ w \} \times \{0,1\}^{\Sigma_w}$ referred to as the long code corresponding to $w$. 
The elements of our output PCSP instance $V$ is the union of all the long code nodes. 
\[
V = \bigcup_{ w \in L \cup R}L_w
\]
We add two types of constraints. 
\begin{enumerate}
    \item Polymorphism Constraints. For every $i \in [l]$, we add the following constraints using the pair of predicates $(A_i, B_i)$. 
    For every $w \in L\cup R$, and multiset of vectors $\textbf{x}^1, \textbf{x}^2, \ldots, \textbf{x}^{k_i} \in \{0,1\}^{\Sigma_w}$ satisfying 
    \[
        ( \textbf{x}^1_j, \textbf{x}^2_j, \ldots, \textbf{x}^{k_i}_j) \in A_i \, \forall j \in [\Sigma_w], 
    \]
    we add the constraint on the $k_i$ nodes $\{ w, \textbf{x}^1\},\{w,\textbf{x}^2\}, \ldots, \{w,\textbf{x}^{k_i}\}$.
    \item Equality Constraints. For every edge $e=(u,v)$ of the Label Cover instance with the constraint $\Pi_e : [2\Sigma]\rightarrow [\Sigma]$, we add the following set of equality constraints. For every $\textbf{x} \in \{0,1\}^{2\Sigma}$ and $\textbf{y} \in \{0,1\}^\Sigma$ such that for all $j \in [2\Sigma]$, $\textbf{x}_{j}=\textbf{y}_{\Pi_e(j)}$, we add an equality constraint between $\{u,\textbf{x}\}$ and $\{v,\textbf{y}\}$ ensuring that the two nodes are assigned the same value. The fact that we can add the equality constraints follows either by identifying the variables together, or by observing that the polymorphism minion of any PCSP remains the same when we add the equality predicate (see e.g.,~\cite{barto2019algebraic,GuruswamiS20}). 
\end{enumerate}

\paragraph{Completeness.}
Suppose that there exists a labeling $\sigma$ that satisfies all the constraints of the Label Cover instance. 
For every node $\{w,\textbf{x}\} \in V$, we assign the dictator function $\textbf{x}_{\sigma(w)} \in \{0,1\}$. By the way we have added the polymorphism constraints, any dictator assignment satisfies them. The equality constraints are also satisfied as the labeling satisfies all the constraints of $\mathcal{G}$.  

\paragraph{Soundness.}
Suppose that there exists an assignment $f: V \rightarrow \{0,1\}$ that satisfies all the polymorphism constraints and the equality constraints. Then, we claim that there exists a labeling $\sigma$ that satisfies $\epsilon := \epsilon(\lambda)>0$ fraction of the constraints of the Label Cover instance $\mathcal{G}$. 

For a vertex $w \in L \cup R$, let $f_w : \{0,1\}^{\Sigma_w} \rightarrow \{0,1\}$ denote the function $f$ restricted to $L_w$. Note that $f_w$ is a polymorphism of the PCSP $\Gamma$ for all $w \in L \cup R$. As every polymorphism of $\Gamma$ has a coordinate with Shapley value at least $\lambda$, for every $u \in L$, we define the set $S(u)$ that is non-empty as follows:
\[
S(u) = \{ i \in [2\Sigma] : \si_{f_u}(i) \geq \lambda \}
\]
As $\sum_{i \in [n]}\si_f(i)=1$ for all functions $f:\{0,1\}^n \rightarrow \{0,1\}$, we have $|S(u)|\leq \frac{1}{\lambda}$ for all $u \in L$. 

As a corollary of~\Cref{lem:random-minor}, we can conclude that there exists $\gamma  = \gamma(\lambda)>0$ such that for every monotone Boolean function $f: \{0,1\}^{2\Sigma} \rightarrow \{0,1\}$ with $\si_f(i)\geq \lambda$, when $g$ is a minor of $f$ with respect to a uniformly random $2$-to-$1$ function $\pi : [2\Sigma]\rightarrow [\Sigma]$, $\si_g(\pi(1)) \geq \frac{\gamma}{2}$ with probability at least $\frac{\gamma}{2}$.
Note that applying~\Cref{lem:random-minor} requires that $\lambda \geq \frac{1}{\Sigma}$. However, even when $\lambda < \frac{1}{\Sigma}$, by picking the coordinate with the largest Shapley value, we can still assume that in every long code function, there is a coordinate with Shapley value at least $\frac{1}{2\Sigma}=\Theta(\lambda)$, and then apply~\Cref{lem:random-minor}.  
Using this $\gamma$, for every $v \in R$, we define the set $S(v)$ as
\[
S(v) = \left\{ i \in [\Sigma] : \si_{f_v}(i) \geq \frac{\gamma}{2} \right\} .
\]
By definition, we have $|S(v)|\leq \frac{2}{\gamma}$ for all $v \in R$. 
As the Label Cover instance is rich $2$-to-$1$, for every $u \in L$, when we pick a uniformly random edge $e=(u,v)$ adjacent to $u$ with constraint $\Pi_e : [2\Sigma]\rightarrow[\Sigma]$, with probability at least $\frac{\gamma}{2}$,  there exist $i_1 \in [2\Sigma], i_2 \in [\Sigma]$ such that $\si_{f_u}(i_1) \geq \lambda$, $\si_{f_v}(i_2) \geq \frac{\gamma}{2}$, and $\Pi_e(i_1)=i_2$. 

We now pick a labeling $\sigma$ of $\mathcal{G}$ by picking uniformly random label from $S(w)$ for all $w \in L \cup R$.
By the above argument, for every $u \in L$, the expected number of constraints of $\mathcal{G}$ that are adjacent to $u$ that the labeling $\sigma$ satisfies is at least $\frac{\gamma}{2} \cdot \lambda \frac{\gamma}{2}$. Summing over all $u \in L$, $\sigma$ satisfies at least $\Omega(\lambda \gamma^2)$ fraction of the constraints of $\mathcal{G}$ in expectation. 
Thus, there exists a labeling to $\mathcal{G}$ that satisfies $\epsilon = \Omega(\lambda \gamma^2)>0$ fraction of the constraints, which completes the proof. \qedhere 
\end{proof}

\section{Adversarial 2-to-1 minor}
\label{sec:adversarial}
We construct an example of a $2$-to-$1$ minor where the Shapley value alters completely after taking the minor. 
\begin{theorem}
\label{thm:example}
Let $n \geq 2$ be a positive integer. There exist two monotone Boolean functions $f:\bits^{2n}\rightarrow \bits$ and $g:\bits^{n}\rightarrow\bits$ such that $g$ is a $2$-to-$1$ minor of $f$ with respect to the $2$-to-$1$ function $\pi : [2n]\rightarrow [n]$ defined as $\pi(i)=\lceil \frac{i}{2}\rceil$. Furthermore, 
\begin{enumerate}
    \item $\si_g(1)=\Omega(1)$, and $\si_g(j)=o(1)$ for all $j >1$. 
    \item $\si_f(3)=\Omega(1)$, and $\si_f(i)=o(1)$ for all $i \in [2n], i \neq 3$.
\end{enumerate}
\end{theorem}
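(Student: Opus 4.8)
The plan is to build $g:\bits^n\to\bits$ first as a monotone function in which coordinate $1$ dominates the Shapley value, then "lift" it through the fixed pairing $\pi(i)=\lceil i/2\rceil$ to a monotone $f:\bits^{2n}\to\bits$ — but cleverly redefine $f$ on inputs that are \emph{not} of the form $x_{\pi(1)},\dots,x_{\pi(2n)}$, i.e.\ on inputs where the two coordinates in some block disagree, so that in $f$ the Shapley mass concentrates on a single coordinate of the first block (say coordinate $3$, which lies in the \emph{second} block $\pi^{-1}(2)=\{3,4\}$), rather than on the block $\{1,2\}$ that $g$'s important coordinate came from. The freedom to do this is exactly the point of the example: a minor only constrains $f$ on the "diagonal" inputs, leaving the off-diagonal inputs free, and by exploiting monotonicity we can route almost all random monotone paths through a boundary event governed by coordinate $3$.

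Concretely I would take $g = \mathsf{THR}_{n,\tau}$-like but tilted: let $g(\mathbf x)=1$ iff $x_1=1$ or $\mathsf{hw}(x_2,\dots,x_n)$ is large (say $\ge n-1$). Then on a random permutation path from $\mathbf 0$ to $\mathbf 1$, coordinate $1$ is the pivotal vote with constant probability (whenever coordinate $1$ appears before the last few of $x_2,\dots,x_n$), so $\si_g(1)=\Omega(1)$, while each $j>1$ is pivotal only in the rare event that coordinate $1$ comes last and $j$ comes next-to-last, giving $\si_g(j)=O(1/n)=o(1)$. For $f$, on diagonal inputs (where $x_{2i-1}=x_{2i}$ for all $i$) set $f$ to equal $g$ of the common values. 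On non-diagonal inputs, define $f(\mathbf x)=1$ iff $x_3=1$ and "enough other coordinates are on" — chosen so that $f$ is globally monotone, agrees with the diagonal prescription, and makes coordinate $3$ pivotal on a constant fraction of random monotone paths in $\bits^{2n}$. I would verify monotonicity by checking the definition block-by-block and verify the diagonal consistency directly; the Shapley computations for $f$ then reduce, via the permutation definition, to elementary probability estimates about where coordinate $3$ falls relative to a small number of "critical" coordinates, exactly as for $g$.

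The main obstacle, and where care is needed, is simultaneously enforcing (a) global monotonicity of $f$, (b) exact agreement with $g$ on the diagonal, and (c) concentration of $f$'s Shapley value on coordinate $3$ while every other coordinate — including $1,2$ and $4$ — gets only $o(1)$. Constraints (b) and (c) pull in opposite directions: on the diagonal coordinate $1$ (hence the block $\{1,2\}$) is heavy, so $f$ restricted to the diagonal already puts $\Omega(1)$ pivot-probability on $\{1,2\}$; the resolution is that the diagonal is an exponentially small fraction of $\bits^{2n}$, so a random monotone path almost never stays on it, and the off-diagonal redefinition can swamp the diagonal contribution. Making this quantitative — showing the probability that a uniformly random maximal chain in $\bits^{2n}$ has its pivot on the diagonal is $o(1)$, while the pivot lands in a coordinate-$3$-controlled boundary with probability $\Omega(1)$ — is the crux, and I would handle it by the alternate formula $\si_f(i)=\frac1{2n}\sum_j \mu_f(j)^{(i)}$ from~\eqref{eq:si-alternate}, estimating the sizes of the boundary sets $\B_f(i)$ directly from the explicit definition of $f$.

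Finally I would double-check the edge case $n=2$ separately (where "$o(1)$" claims are vacuous/trivially interpreted), and confirm that $\pi(i)=\lceil i/2\rceil$ is genuinely $2$-to-$1$ so that $g$ is a legitimate $2$-to-$1$ minor of the $f$ we constructed.
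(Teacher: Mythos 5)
There is a genuine gap, and it is not a minor technicality.\ Your plan builds $g$ with $\si_g(1)=\Omega(1)$ and then fills $f$ off the diagonal to push the Shapley mass onto coordinate~$3$; the intuition you offer is that the diagonal is exponentially small, so a random monotone chain rarely pivots there. But the constraint imposed by the minor is not that $f$ agrees with $g$ on the diagonal points alone --- after taking the monotone closure, the diagonal prescription determines $f$ on a \emph{constant-measure} region, and that region already concentrates $\Omega(1)$ Shapley value on the block $\{1,2\}$, no matter how you fill the rest. Concretely, with your $g$ ($g=1$ iff $x_1=1$ or $\textsf{hw}(x_2,\dots,x_n)\ge n-1$), the diagonal point $(1,1,0,\dots,0)$ has $g=1$, so every $\mathbf{y}$ with $y_1=y_2=1$ is forced to $1$; and every $\mathbf{y}$ with $y_1=y_2=0$ and some full pair $\{2i-1,2i\}$ (with $i\ge 2$) turned off is forced to $0$. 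At Hamming weight $\approx n$ the second event happens for almost all $\mathbf{y}$. One now checks via \Cref{eq:si-alternate} that for every monotone completion, a constant fraction of sets $S$ with $2\in S$ (or, symmetrically, $2\notin S$) lie in $\B_f(1)$, giving $\si_f(1)+\si_f(2)=\Omega(1)$. So under the stated $\pi(i)=\lceil i/2\rceil$ your $g$ makes property~(2) of the theorem unachievable, and the ``$x_3=1$ and enough others on'' fill rule (whatever threshold you pick) cannot rescue it.

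The paper sidesteps exactly this obstruction, and that is the idea your sketch is missing: it does \emph{not} construct the minor pair directly over $\pi(i)=\lceil i/2\rceil$.\ It first builds an intermediate $f:\bits^{2n-1}\to\bits$ and $g:\bits^n\to\bits$ over $\pi:[2n-1]\to[n]$ with $\pi(1)=1$ and $\pi(i)=\lceil (i+1)/2\rceil$ for $i>1$, so that $g$'s heavy coordinate~$1$ corresponds to a \emph{single} coordinate of $f$ rather than a duplicated block. Then it inserts a dummy coordinate at position~$2$ (defining $f'(y_1,\dots,y_{2n})=f(y_1,y_3,\dots,y_{2n})$) to reach arity~$2n$ and shift the heavy coordinate to position~$3$. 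The choice of $g$ is also deliberately non-extreme (a threshold with a narrow $x_1$-decided band of width $2\%$ around $\textsf{hw}\approx n/2$), so that the diagonal $1$s and $0$s both sit at Hamming weight $\approx n$ in $\bits^{2n-1}$; consequently a random middle-weight $\mathbf{y}$ is almost never comparable to a diagonal point, leaving genuine freedom to reassign the pivot. Your $g$ is, roughly, an OR of $x_1$ with a top threshold, which places diagonal $1$s at weight $2$ and diagonal $0$s near weight $2n-4$ and leaves essentially no such freedom. Beyond these two points your outline (use of \Cref{eq:si-alternate}, checking monotonicity, handling small $n$) is sound, but as written the construction does not produce the required $f$.
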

\begin{proof}
Similar to the proof of~\Cref{thm:hardness}, we construct the minor function pair in two steps. 
\begin{enumerate}
    \item First, we construct Boolean monotone functions $f':\{0,1\}^{2n-1} \rightarrow \{0,1\}$ and $g:\{0,1\}^{n}\rightarrow \{0,1\}$ such that $g$ is a minor of $f$ with respect to the function $\pi : [2n-1] \rightarrow [n]$ defined as $\pi(i)=\lceil \frac{i+1}{2} \rceil$ for all $i$. Furthermore, $\si_g(1)=\Omega(1)$, and $\si_g(j)=o(1)$ for all $j >1$. We also have $\si_{f'}(2)=\Omega(1)$, and $\si_{f'}(i)=o(1)$ for all $i \in [2n-1], i \neq 2$.
    \item We define the function $f:\{0,1\}^{2n}\rightarrow \{0,1\}$ as 
    \[
    f(y_1, y_2, \ldots, y_{2n})= f'(y_1, y_3, \ldots, y_{2n})
    \]
    Note that $g$ is a minor of $f$ with respect to the $2$-to-$1$ function $\pi :[2n]\rightarrow [n]$ defined as $\pi(i)=\lceil \frac{i}{2} \rceil$. Furthermore, by definition, we have $\si_{f}(3)=\Omega(1)$, and $\si_{f}(i)=o(1)$ for all $i \in [2n], i \neq 3$.
\end{enumerate}
Henceforth, our goal is to construct a pair of functions as in the first step above. 

We define a partial Boolean function to be a function from $\bits^n$ to $\{0,1,?\}$. 
    A partial Boolean function $f$ on $n$ variables is monotone if for all $\textbf{p} \in \{0,1\}^n$ and $\textbf{q} \in \{0,1\}^n$ such that $\textbf{p}\leq \textbf{q}$, if $f(\textbf{p})=1$, then $f(\textbf{q})=1$, and if $f(\textbf{q})=0$, then $f(\textbf{p})=0$. 
    
    Now, consider $g:\{0,1\}^{n}\rightarrow \bits$ to be 
    \[
    g(\textbf{x}) = \begin{cases}
    &1 \text{ if } \sum_{j=2}^n x_j \geq \frac{51n}{100} \\ 
    &0 \text{ if } \sum_{j=2}^n x_j \leq \frac{49n}{100} \\ 
    &x_1 \text{ if } \frac{49n}{100} < \sum_{j=2}^n x_j < \frac{51n}{100}
    \end{cases}
    \]
    
    By definition, $g$ is a monotone function, and using~\Cref{eq:si-alternate}, we can infer that $\si_g(1)=\frac{1}{50}$, and $\si_g(j) < \frac{1}{n}$ for all $j >1$. 

    We now construct $f'$ in three steps. 
    Start with $f'='?'$. 
    \begin{enumerate}
        \item (Preserving the minor) First, set the value of entries of $f'$ that are of the form\\ $(x_1,x_2,x_2,\cdots,x_n,x_n)$ as 
        \[
        f'(x_1, x_2, x_2, \ldots, x_n, x_n) = g(x_1, x_2, \ldots, x_n) \, \, \forall \textbf{x} \in \{0,1\}^n
        \]
        We then extend it both upwards and downwards i.e., if $f'(\textbf{p})$ is set to $1$ and $\textbf{p} \leq \textbf{q}$, then set $f'(\textbf{q})=1$ as well, and similarly, if $f'(\textbf{q})$ is set to $0$, and $\textbf{p} \leq \textbf{q}$, then we set $f'(\textbf{p})=0$. 
        This ensures that $g$ is a minor of $f'$ and that the partial function $f'$ is monotone.

        \item (Destroying the influence of $1$) Next, we ensure that the Shapley value of the coordinate $1$ is low by the following operation: 
        consider all $\textbf{y}$ such that $f'(\textbf{y})$ has not been set in the first step, $y_1=0$ and $f'(1,y_2,\cdots,y_{2n-1})$ is already set to $1$ in the first step. Then set $f'(\textbf{y})$ to be $1$. 
        Similarly, if $\textbf{y}$ satisfies $y_1=1$ and $f'(0,y_2,\cdots,y_{2n-1})$ is already set to $0$ in the first step, set $f'(\textbf{y})$ to be $0$ if it has not been set in the first step.

        We claim that the updated partial function $f'$ is still a monotone partial function. 
        Consider $\textbf{p}, \textbf{q} \in \{0,1\}^{2n-1}$ such that $\textbf{p} \leq \textbf{q}$.
        Suppose that $f'(\textbf{p})$ is set to be $1$. If it is set in the first step, as we extended the partial function upwards in the first step, $f'(\textbf{q})=1$ as well. If $f'(\textbf{p})$ is set to be $1$ in the second step, it implies that $f'(\textbf{p}')$ has been set to $1$ in the first step, where $\textbf{p}'$ is obtained from $\textbf{p}$ by setting $p_1$ to be $1$. 
        Let $\textbf{q}' \in \{0,1\}^{2n-1}$ be obtained from~$\textbf{q}$ by setting $q_1=1$. As $\textbf{p}' \leq \textbf{q}'$, $f'(\textbf{q}')$ has been set to $1$ in the first step as well. Thus, $f'(\textbf{q})$ is set to be $1$ in the second step. 
        The same argument can be used to show that if $f'(\textbf{q})=0$, then $f'(\textbf{p})=0$ as well. 
    \item (Adding influence to $2$) For all $\textbf{y}$ for which $f'(\textbf{y})='?'$ set $f'(\textbf{y})=y_2$. The fact that the final function $f'$ is monotone follows from observing that any completion of a partial monotone function using a monotone function results in a monotone function. 
    \end{enumerate}

Finally, our goal is to argue about the Shapley value of the coordinates of the function~$f'$. 
First, we show that the Shapley value of the coordinate $1$ in $f'$ is $o(1)$.
Suppose there exists $\textbf{p}=(0,y_2,y_3,\cdots,y_{2n-1})$ and $\textbf{q}=(1,y_2,y_3,\cdots,y_{2n-1})$ such that $f'(\textbf{p})=0$ and $f'(\textbf{q})=1$. 
We claim that both the values $f'(\textbf{p})$ and $f'(\textbf{q})$ are set in the first step of the above procedure. 
Suppose for contradiction that this is not the case. If neither of them is set in the first step, then they will not be set in the second step either, and in the third step, both of them will be assigned the same value, a contradiction. 
If exactly one of them is set in the first step, then in the second step, the other value would be set to be equal to it, a contradiction as well. 
Thus, both the values $f'(\textbf{p})$ and $f'(\textbf{q})$ are set in the first step. 
    
    Let $B =\mathcal{B}_g(1) \subseteq \{0,1\}^{n-1}$ be the boundary of the coordinate $1$ in $g$. 
    As $f'(\textbf{q})$ is set to be $1$ in the first step, there exists $\textbf{x} \in \{0,1\}^n$ such that $g(\textbf{x})=1$ and $(x_1,x_2,x_2,\cdots,x_n,x_n) \leq \textbf{q}$. 
    As $(x_1,x_2,x_2,\cdots,x_n,x_n)$ is not less than or equal to $\textbf{p}$, we can conclude that $x_1=1$ and $g(0,x_2,x_3,\cdots,x_n)=0$. In other words, $(x_2,x_3,\cdots,x_n)\in B$. 
    Similarly, there exists $\textbf{x}'$ such that $g(\textbf{x}')=0$ and $(x'_1,x'_2,x'_2,\cdots,x'_n,x'_n) \geq \textbf{p}$. 
    By the same argument as above, we can conclude that $(x'_2,x'_3,\cdots,x'_n)\in B$. Combining the both, we can conclude that there exist $\textbf{x},\textbf{x}' \in B$ such that $(x_2, x_2, x_3, x_3, \ldots, x_n,x_n) \leq (y_2,y_3,\cdots,y_{2n-2})\leq (x'_2, x'_2, x'_3, x'_3, \ldots, x'_n, x'_n)$. 
    Note that if the above inequality is true for a $(y_2,y_3,\cdots,y_{2n-2})$, we directly get that $(y_2,y_3,\cdots,y_{2n-2})$ is in the boundary of the coordinate $1$ in $f'$. 
    
    Observe that the boundary of coordinate $1$ in $g$ is the set of vectors $(x_2,x_3,\cdots,x_n)$ such that  $\frac{49}{100}n \leq \sum_{j=2}^n x_j \leq \frac{51}{100}n$.
  By the previous argument, we can deduce that the boundary $B'=\mathcal{B}_f(1)$ of the coordinate $1$ in $f'$ is the set of vectors $\textbf{y}=(y_2,y_3,\cdots,y_{2n-1})$ that satisfy the 
 following property: The number of $i\in [n-1]$ such that both $y_{2i}=y_{2i+1}=1$ is at least $\frac{49}{100}n$. Similarly, the number of $i\in [n-1]$ such that $y_{2i}=y_{2i+1}=0$ is at least
        $\frac{49}{100}n$. 
        Observe that this implies that we require that $\frac{49}{50}n \leq \sum_{j=2}^{2n-1}y_j \leq \frac{51}{50}n$.
    However, for every integer $l$ such that $\frac{49}{50}n \leq l \leq \frac{51}{50}n$, when we sample a uniformly random vector $\textbf{y}=(y_2, y_3, \ldots, y_{2n-1})$ with $\sum_{j=2}^{2n-1}y_j=l$, the probability that the number of $i\in [n-1]$ such that both $y_{2i}=y_{2i+1}=1$ is at least $\frac{49}{100}n$ is $o(1)$.
    Thus, using~\Cref{eq:si-alternate}, we can infer that the Shapley value of the coordinate $1$ in $f'$ is $o(1)$.

We now show that the coordinate $2$ has $\Omega(1)$ Shapley value in $f'$. Consider $\textbf{y}=(y_1, y_2, \ldots,\allowbreak y_{2n-1})$ such that $\frac{49n}{50}<\textsf{hw}(\textbf{y})\leq \frac{51n}{50}$. If the number of $i$ such that both $y_{2i}=y_{2i+1}=1$ is less than $\frac{49}{100}n$, we have $(y_1, y_3, \ldots, y_{2n-1}) \in \mathcal{B}_f(2)$. However, for every integer $l$ such that $\frac{49}{50}n \leq l \leq \frac{51}{50}n$, when we sample a uniformly random $\textbf{y}$ with $\textsf{hw}(\textbf{y})=l$, with probability $1-o(1)$, the number of $i$ such that both $y_{2i}=y_{2i+1}=1$ is less than $\frac{49}{100}n$. 
Thus, using~\Cref{eq:si-alternate}, we can infer that the Shapley value of the coordinate $2$ is $\Omega(1)$ in the function $f'$. 
Finally, by symmetry, we can observe that $\si_{f'}(i)=\si_{f'}(3)$ for all $i \geq 3$, and thus, as $\sum_i \si_{f'}(i)=1$, $\si_{f'}(i)=o(1)$ for all $i \geq 3$. 
\end{proof}

\section*{Acknowledgments}
We thank Libor Barto, whose talk~\cite{Bar18} and insightful discussions inspired our work.

\printbibliography

\end{document}